\newcommand{\argmax}{\mathop{\rm argmax}}
\newtheorem{lem}{Lemma}[section]
\newtheorem{asmp}{Assumption}[section]
\newtheorem{defn}{Definition}[section]
\newtheorem{fact}{Fact}[section]
\newtheorem{rem}{Remark}[section]
\def\approxcorrect{\checkmark\kern-1.1ex\raisebox{.89ex}{$\times$}}
\def\eqref#1{equation~\ref{#1}}
\def\1{\bm{1}}
\DeclareMathAlphabet{\mathsfit}{\encodingdefault}{\sfdefault}{m}{sl}
\SetMathAlphabet{\mathsfit}{bold}{\encodingdefault}{\sfdefault}{bx}{n}
\def\gA{{\mathcal{A}}}
\def\gF{{\mathcal{F}}}
\def\gG{{\mathcal{G}}}
\definecolor{dkgreen}{rgb}{0,0.6,0}
\definecolor{gray}{rgb}{0.5,0.5,0.5}
\definecolor{mauve}{rgb}{0.58,0,0.82}
\tiny\color{gray},
\begin{document}
\title{Taming the Exponential Action Set: Sublinear Regret and Fast Convergence to Nash Equilibrium in Online Congestion Games}
\author{\name Jing Dong \email jingdong@link.cuhk.edu.cn  \\
       \addr The Chinese University of Hong Kong, Shenzhen\\
       \name Jingyu Wu \email wujingyu@mail.ustc.edu.cn \\ \addr University of Science and Technology of China\\
       \name Siwei Wang \email siweiwang@microsoft.com \\
       \addr Microsoft Research Asia \\
       \name Baoxiang Wang \email bxiangwang@cuhk.edu.cn  \\
       \addr The Chinese University of Hong Kong, Shenzhen\\
       \name Wei Chen \email weic@microsoft.com \\
       \addr Microsoft Research Asia}
\maketitle

\begin{abstract}
The congestion game is a powerful model that encompasses a range of engineering systems such as traffic networks and resource allocation. It describes the behavior of a group of agents who share a common set of $F$ facilities and take actions as subsets with $k$ facilities. In this work, we study the online formulation of congestion games, where agents participate in the game repeatedly and observe feedback with randomness. We propose CongestEXP, a decentralized algorithm that applies the classic exponential weights method. By maintaining weights on the facility level, the regret bound of CongestEXP avoids the exponential dependence on the size of possible facility sets, i.e., $\binom{F}{k} \approx F^k$, and scales only linearly with $F$. Specifically, we show that CongestEXP attains a regret upper bound of $O(kF\sqrt{T})$ for every individual player, where $T$ is the time horizon. On the other hand, exploiting the exponential growth of weights enables CongestEXP to achieve a fast convergence rate. If a strict Nash equilibrium exists, we show that CongestEXP can converge to the strict Nash policy almost exponentially fast in $O(F\exp(-t^{1-\alpha}))$, where $t$ is the number of iterations and $\alpha \in (1/2, 1)$.  
%

\end{abstract}

\section{Introduction}


Congestion games are a class of general-sum games that can be used to describe the behavior of agents who share a common set of facilities (resources) \citep{brown1949some}. 
In these games, each player chooses a combination of facilities, and popular facilities will become congested, yielding a lower utility for the players who choose them. Thus, players are incentivized to avoid congestion by choosing combinations that are less popular among the other players. A range of real-world scenarios can be captured by the congestion game model, such as traffic flow, data routing, and wireless communication networks \citep{tekin2012atomic,cheung2014congestion,zhang2020stochastic}. 

In the model of the congestion game, the Nash equilibrium is a popular concept to describe the behavior of selfish players and the dynamics induced by decentralized algorithms. It describes a stable state of the game where no player can improve their utility by unilaterally changing their choice of actions. When a unique Nash equilibrium exists in the congestion game, it can be a reference point for players to coordinate to avoid suboptimal outcomes. 
Beyond the Nash equilibrium, social welfare is a significant metric, capturing the overall utility or well-being of all players involved. It serves as a crucial benchmark, enabling the evaluation of the efficiency loss incurred when transitioning from centralized to decentralized algorithms.

In the classic one-shot congestion game setting, the Nash equilibrium and the loss of efficiency due to decentralized dynamics have been well studied \citep{roughgarden2002bad,roughgarden2007routing}. However, these results do not provide insights into how players arrive at the equilibrium. This motivates the study of congestion games in an online learning framework, where players participate in the game repeatedly at every time step. This framework better models many realistic scenarios, such as the traffic congestion problem in urban areas. In this repeated congestion game setting, players such as drivers in a congested city must choose between different routes to reach their destinations every day. As more drivers use a particular route, the congestion on that route increases, leading to higher travel times and lower utility. In this scenario, players can update their desired route every day to optimize their utility, but the observed utility by each player may be subject to randomness due to uncertainty in the actual congestion situation (e.g., the influence of the weather). All these make it suitable to model the congestion game in an online learning framework. 

While there have been various decentralized algorithms that can attain the Nash equilibria efficiently for the general online games, they can suffer from a linear dependency on the number of actions when directly applied to the congestion game \citep{daskalakis2011near,syrgkanis2015fast,chen2020hedging,hsieh2021adaptive,daskalakis2021near,giannou2021rate}, which is exponential with $k,F$. On the other hand, algorithms designed specifically for congestion games either only converge to Nash equilibria asymptotically \citep{kleinberg2009multiplicative,krichene2015online,palaiopanos2017multiplicative}, on average \citep{cuilearning22}, or require additional assumptions on the structure of the game \citep{chen2015playing,chen2016generalized}. Moreover, to the best of our knowledge, there is no algorithm that can simultaneously guarantee both low regret and fast convergence to Nash equilibrium for each player. While some online learning algorithms, such as exponential weights, have been shown to converge faster than others due to the specific choice of regularization \citep{giannou2021rate}, previous regret results indicate that their guarantees still rely on the exponentially large number of actions, due to their specific form of updates (exponential weighting) \citep{cohen2016exponentially}.

In this paper, we study the online congestion game with semi-bandit and full information feedback. We propose a decentralized algorithm that modifies the celebrated exponential weights algorithm, which can be utilized by each player without additional information about other players' utility. 
From the individual player's perspective, we show that the algorithm guarantees sublinear individual regret, with respect to the best action in hindsight when holding the other player's strategy fixed. 
We remark that the regret is also only linear with respect to the number of facilities. As a result of this, we show that the optimal social welfare can be efficiently approximated, up to an error that is only linear with respect to the number of facilities. When a strict Nash equilibrium exists for the congestion game, we also prove that our algorithm is capable of converging to the strict Nash equilibrium fast, with an almost exponentially fast rate that is only linear with respect to the number of facilities. 
\section{Related works}

\paragraph{Learning in games}
Online learning has a long history that is closely tied to the development of game theory. The earliest literature can be traced back to Brown's proposal on using fictitious play to solve two-player zero-sum games \citep{brown1949some}. 
It is now understood that fictitious play can converge very slowly to Nash equilibrium \citep{daskalakis2014counter}.
On the other side, it has been shown that if each player of a general-sum, multi-player game experiences regret that is at most $f(T)$, the empirical distribution of the joint policy converges to a coarse correlated equilibrium of the game with a rate of $O(f(T)/T)$ \citep{cesa2006prediction}. This implies that a variety of online learning algorithms such as Hedge and Follow-The-Regularized-Leader algorithms can converge to coarse correlated equilibria at a rate of $O(1/\sqrt{T})$.

While the standard no-regret learning dynamic can guarantee convergence to equilibria, it has been shown that more specialized no-regret learning protocols can do better \citep{daskalakis2011near,syrgkanis2015fast,chen2020hedging,hsieh2021adaptive,daskalakis2021near}. 
It has also been shown that when strict pure Nash equilibria are present, algorithms that are based on entropic regularization (e.g. exponential weights) can converge fast to the equilibria \citep{cohen2016exponentially,giannou2021rate}. Moreover, such convergence rate holds for a variety of different feedback models, from full information to bandit feedback. 

Though all of the above-mentioned methods are applicable to congestion games, the results usually involve a linear dependency on the number of actions. As each action is a combination of the different facilities (resources) in the congestion games, the results lead to the undesirable  exponential dependency on the number of facilities. 

\paragraph{Learning in online congestion games}
Congestion games were first introduced in the seminal work \citet{rosenthal1973class} as a class of games with pure-strategy Nash equilibria. It has then been extensively studied, where its Nash equilibria have been characterized in \citet{roughgarden2002bad} and a comprehensive introduction has been given in \citet{roughgarden2007routing}. 

In the online setting, many works use no-regret learning to develop learning dynamics in this class of games for efficient convergence. \citet{kleinberg2009multiplicative} are the first to study no-regret learning for congestion games. They showed that the well-known multiplicative weights learning algorithm results in convergence to pure equilibria. Furthermore, they identified a set of mixed Nash equilibria that are weakly stable and showed that the distribution of play converges to this set. 
Followup works \citet{krichene2015online} showed that multiplicative weights algorithms converge to the set of Nash equilibria in the sense of Ces\`{a}ro means, and \citet{palaiopanos2017multiplicative} investigated the effect of learning rate on convergence to Nash equilibria. 

With an additional assumption of convex potential functions, \citet{chen2015playing,chen2016generalized} established a non-asymptotic convergence rate. However, their rate has an exponential dependency on the number of facilities. \citet{cuilearning22} gave the first non-asymptotic convergence rate under semi-bandit feedback and without an exponential dependency on the number of facilities. However, the convergence is with respect to the averaged-over-time policy and suffers from a dependency of $F^9$, where $F$ is the number of facilities. This result is later improved by concurrent work \citet{panageas23}, who proposed an online stochastic gradient descent algorithm that convergences to an $\epsilon$-approximate Nash equilibrium in $O(\epsilon^{-5})$ time while each individual player enjoys a regret of $O(T^{4/5})$. 

\paragraph{Combinatorial bandits and shortest path}
Combinatorial bandits offer an extension of the classic multi-armed bandit problem where the player must select an action that involves a combination of various resources \citep{cesa2012combinatorial,chen2013combinatorial,lattimore2020bandit}. In a special case, the shortest path problem can be viewed as a combinatorial bandit problem where the resources are edges on a graph and the action is a path \citep{gyorgy2007line}. Efficient algorithms have been proposed for these problems, and it has been shown that the sublinear regret only linearly depends on the number of resources. 
However, it is important to note these algorithms are designed for a single player, and as a result, they may not converge to a Nash equilibrium when applied directly to congestion games by allowing each player to execute the algorithm.
\section{Problem Formulation}

\subsection{Congestion game} 
A congestion game with $n$ players is defined by $\gG = \left(\gF, \{\gA_i\}_{i=1}^n, \{r^f\}_{f \in \gF}\right)$, where
i) $\gF$ is the facility set that contains $F$ facilities;
ii) $\gA_i$ is the action space for player $i$ and contains $A$ actions (we assume that the action space for each player is the same), where each action $a_i \in \gA_i$ is a combination of $k$ facilities in $\gF$;
and iii) $r^f: (\gA_1 \times \dots \times \gA_n) \rightarrow [0,1]$ is the reward function for facility $f\in \gF$, which only depends on the number of players choosing this facility, i.e., $\sum^n_{i=1} \mathbb{I} \{f \in a_i\}$.
We denote $a = (a_i, a_{-i})$ as a joint action, where $a_{-i}$ is the actions of all other players except player $i$.   
The total reward collected by player $i$ with joint action $a = (a_i, a_{-i})$ is $r_i(a_i, a_{-i}) = \sum_{f \in a_i} r^f(a_i, a_{-i})$. 
Without loss of generality, we assume that $r^f(a) \in [0,1]$.

Deterministically playing actions $a = (a_i, a_{-i})$ is referred to as a pure strategy. 
The player can also play a mixture of pure strategy, $\omega_i \in \Delta(\gA_i)$, where $\Delta(\gA_i)$ denotes the probability simplex of action space $\gA_i$. Similarly, we use $\omega = (\omega_i, \omega_{-i})$ to denote a joint randomized policy. 

\subsection{Nash equilibrium}
One of the commonly used solution concepts in congestion games is Nash equilibrium (NE), and the policies that lead to a Nash equilibrium are referred to as Nash policies. The players are said to be in a Nash equilibrium when no player has an incentive from deviating from its current policy (as described in the definition below). 
\begin{defn}[Nash equilibrium]\label{def:nash}
A policy $\omega^\ast = (\omega^\ast_1, \ldots, \omega^\ast_n)$ is called a \textbf{Nash equilibirum} if 
for all $i \in [n]$, $r_i(\omega_i^\ast, \omega_{-i}^\ast) \geq  r_i(\omega_i, \omega_{-i}^\ast) \,, \forall \omega_i \in \Delta(\gA_i)$.
When $\omega^\ast$ is pure, the equilibrium is called a \textbf{pure Nash equilibrium}. 
In addition, when the strategy is pure and the inequality is a strict inequality, 
the equilibrium is called a \textbf{strict Nash equilibrium}. 
\end{defn}

\begin{fact}[\citep{rosenthal1973class}]
There exists a pure Nash equilibrium in any congestion game.
\end{fact}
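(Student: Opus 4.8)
The plan is the classical argument of Rosenthal: construct an \emph{exact potential function} for the congestion game and then exploit finiteness of the joint action set. For a joint action $a = (a_1,\dots,a_n)$ let $n_f(a) = \sum_{i=1}^n \mathbb{I}\{f \in a_i\}$ denote the load of facility $f$, and — since by the problem formulation $r^f(a)$ depends on $a$ only through this load — write $r^f[\ell]$ for the common value of $r^f$ when exactly $\ell$ players use $f$. Define
$$\Phi(a) \;=\; \sum_{f \in \gF}\; \sum_{\ell=1}^{n_f(a)} r^f[\ell],$$
with the convention that an empty inner sum is zero.

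The crucial step is to verify that $\Phi$ is an exact potential: for every player $i$, every $a_{-i}$, and all $a_i, a_i' \in \gA_i$,
$$r_i(a_i', a_{-i}) - r_i(a_i, a_{-i}) \;=\; \Phi(a_i', a_{-i}) - \Phi(a_i, a_{-i}).$$
I would prove this by tracking the effect of player $i$ switching from $a_i$ to $a_i'$ facility by facility. Facilities lying in $a_i \cap a_i'$, or used by neither, keep the same load and contribute zero to both sides. A facility $f \in a_i \setminus a_i'$ loses player $i$, so its load drops from some value $\ell$ to $\ell-1$: this removes the summand $r^f[\ell]$ that player $i$ was collecting from $f$ (contribution $-r^f[\ell]$ to the left side) and removes exactly the top term $r^f[\ell]$ of that facility's inner sum in $\Phi$ (contribution $-r^f[\ell]$ to the right side). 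Symmetrically, a facility $f \in a_i' \setminus a_i$ gains player $i$, its load rises to some value $m$, contributing $+r^f[m]$ to both sides. Summing over all facilities yields the identity; the only real care needed is in aligning the load indices before and after the switch, and I note that no monotonicity of $r^f$ is assumed or used anywhere.

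Finally, since each $\gA_i$ is finite, the joint action set $\gA_1 \times \cdots \times \gA_n$ is finite, so $\Phi$ attains a maximum at some $a^\ast = (a_1^\ast,\dots,a_n^\ast)$. For any player $i$ and any $a_i \in \gA_i$, the potential identity gives
$$r_i(a_i^\ast, a_{-i}^\ast) - r_i(a_i, a_{-i}^\ast) \;=\; \Phi(a^\ast) - \Phi(a_i, a_{-i}^\ast) \;\geq\; 0,$$
so no unilateral deviation to a pure action strictly helps any player, and hence (by linearity of expectation, since $a_{-i}^\ast$ is pure) no deviation to a mixed $\omega_i \in \Delta(\gA_i)$ helps either; thus $a^\ast$ is a pure Nash equilibrium in the sense of Definition~\ref{def:nash}. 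I do not anticipate a substantive obstacle: this is a textbook result, and the only subtleties are the bookkeeping in the potential identity and respecting the paper's reward-maximization convention (rather than the cost-minimization one in which congestion games are often phrased).
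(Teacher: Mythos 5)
Your proof is correct and is precisely the classical potential-function argument of Rosenthal that the paper invokes by citation (the paper states this as a Fact and gives no proof of its own): the exact-potential identity is verified correctly facility by facility, and the finiteness/maximization step together with the observation that pure-deviation optimality extends to mixed deviations by linearity yields a pure Nash equilibrium in the sense of Definition~\ref{def:nash}. No gaps.
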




\subsection{Social welfare and price of anarchy}

Except for Nash equilibrium, another commonly used metric to measure the efficiency of the dynamics between the players is through social welfare. 
For a given joint action $a = \{a_i\}_{i=1}^n$, the social welfare is defined to be the sum of players' rewards, i.e., $W(a) = \sum^n_{i=1} r_i(a)$, and the optimal social welfare of the game is defined as 
\begin{align*}
    \mathrm{OPT} = \max_{a \in \mathcal{A}_1 \times \dots \times \mathcal{A}_n} W(a) \,.
\end{align*}
This optimality is under the case where a central coordinator could dictate each player's strategy, and each player's individual incentives are not considered. 

Based on the definition of $\mathrm{OPT}$, We can define smooth games as follows. 
\begin{defn}[Smooth game \citep{roughgarden2009intrinsic}]\label{def:smooth}
A game is $(\lambda, \mu)$-smooth if there exists a joint action $a^\ast$ such that for any joint action $a$, 
$\sum_{i \in n} r_i\left(a_i^\ast, a_{-i}\right) \geq \lambda \mathrm{OPT}-\mu W(a) 
$.
\end{defn}

\citet{nisan2007algorithmic} show that congestion games are smooth when the reward function are affine, that is, when 
	$r^f(a)$ is an affine function on the scalar variable $\sum^n_{i=1} \mathbb{I} \{f \in a_i\}$. 
%
This property enables certain decentralized no-regret learning dynamics to efficiently approximate the optimal welfare \citep{syrgkanis2015fast}. 



\subsection{Online congestion game} 

In this paper, we study the congestion game in an online setting with a finite time horizon $T$, where the underlying reward function is unknown. At each time step $t \in [T]$, each player chooses (randomized) policy $\omega_i^t$, from which it forms a joint policy $\omega^t = (\omega_1^t, \ldots, \omega_n^t)$. Then each player $i$ draws a random action $a_i^t \sim \omega_i^t$, plays this action (denote $a^t$ the joint action), and receives overall reward of $\sum_{f\in a_i^t} R^f(a^t)$, where $R^f(a^t)$'s are random variables that satisfy the following assumption. 
%
\begin{asmp}\label{asmp:sto_rewards}
For any facility $f \in \mathcal{F}$, any joint action $a^t$ and any player $i \in [n]$, let $\mathcal{H}_t$ be the history up to time step $t-1$. 
Then, $R^f(a) \in [0,1]$, and $\mathbb{E}\left[R^f(a^t)\mid \mathcal{H}_t\right] = r^f(a^t)$.
\end{asmp}

The assumption implies that the mean of $R^f(a^t)$ is always $r^f(a^t)$. Hence the Nash equilibrium and expected social welfare of the online congestion game is the same as those of the offline congestion game.

We consider two types of feedback rules in this paper: \textit{semi-bandit feedback}, and \textit{full information feedback}.
In the semi-bandit feedback, player $i$ observes all the $R^f(a^t)$'s for any $f \in a_i$ (only the facilities he played); and in full information feedback, player $i$ observes all possible information $R^f(a_i,a_{-i})$, for every $a \in \mathcal{A}_i$, $\forall f \in a$. 
The efficiency of a sequence of policy $\{\omega_i^t\}_{t=1}^T$ can be measured by the individual regret of all the players (which is defined as follows). 
\begin{defn}[Individual regret]\label{def:individual_regret}
The individual regret of player $i$ playing policy $\{\omega_i^t\}_{t=1}^T$ is defined as the cumulative difference between the received rewards and the rewards incurred by a best-in-hindsight policy, that is 
\begin{align*}
    \mathrm{Regret}_i(T) = \max_{\omega_i \in \gA_i, \{\omega_{-i}^t\}^T_{t=1}} \sum^T_{t=1} r_i(\omega_i, \omega_{-i}^t) - r_i(\omega_i^t, \omega_{-i}^t) \,.
\end{align*}
\end{defn}



\section{Algorithm}

\begin{algorithm}[t]
\caption{CongestEXP}\label{alg}
\begin{algorithmic}[1]
\State \textbf{Input: }learning rate $\eta$. 
\State For all $f \in \mathcal{F}$, initialize $\tilde{y}_i^0(f) = 0$, $\omega_i^0(a)$ initialized according to Equation (\ref{eq:alg7}).
\For{$t = 1, \ldots, T$}
\State Players play strategy $a^t \sim \omega^t$, $\omega^t = (\omega_1^t, \ldots, \omega_n^t)$. 
\State Each player $i$ observe $R^f(a^t) \sim r^f(a^t)$, for each $f \in a_i^t$.
\State 
Each player $i \in [N]$ computes $\Tilde{y}_i^{t}(f)$ according to Equation (\ref{eq:alg6}).
\State Each player $i \in [N]$ updates $\omega_i^{t+1}(a)$ according to Equation (\ref{eq:alg7}).
\EndFor
\end{algorithmic}
\end{algorithm}

In this section, we introduce CongestEXP, a decentralized algorithm for online congestion games (Algorithm \ref {alg}). 

The algorithm uses the combinatorial nature of the action space. 
Each player maintains a sampling distribution $\omega_i^t$ and a facility-level reward estimator $\tilde {y}_i^t$.
At each time step, they first draw a random action $a_i^t \sim \omega_i^t$ and play this action.
Then they use their received information to update $\tilde {y}_i^t(f)$'s (for all $f\in \gF$) as follows
\begin{align}\label{eq:alg6}
    \Tilde{y}_i^{t}(f) =  1 - \frac{\mathbb{I} \{f \in a_i^t\}(1 - R^f(a^t))}{q_i^t(f)} \,, \quad q_i^t(f) = \sum_{a_i \in \gA_i, f \in a_i}\omega_i^t(a_i) \,,
\end{align}
where $q_i^t(f)$ is  the probability that player $i$ selects facility $f$ at time $t$ based on its current policy $\omega_i^t$. 
%
One can easily check that if $f \in a_i^t$, $\Tilde{y}_i^{t}(f)$ is an unbiased estimator for $r^f(a^t)$, and with these facility-level reward estimators, the players then update $\omega_i^{t+1}$ as follows (exponential weighting), and then proceed to the next time step.
\begin{align}\label{eq:alg7}
    \omega_i^{t+1}(a) = \frac{\prod_{f\in a}\Tilde{\omega}_i^{t}(f)}{\sum_{a_i \in \gA_i}\prod_{f^\prime \in a_i}\Tilde{\omega}_i^{t}(f^\prime)} \,, \forall a \in \gA_i\,, \quad \text{where}\  \ \Tilde{\omega}_i^{t}(f) = \exp\left(\eta \sum^t_{j=1}\Tilde{y}_i^{j}(f)\right) \,.
\end{align}

On the one hand, in the semi-bandit setting, our algorithm leverages this kind of feedback and estimates rewards at the facility level.
We note that this idea has also been previously utilized to tackle online shortest path problems and combinatorial bandit problems, as documented in literature \citep{gyorgy2007line,cesa2012combinatorial,chen2013combinatorial,combes2015combinatorial}.
This enables us to achieve a low individual regret (Theorem \ref{thm:regret}) and guarantee a lower bound for the overall social welfare (Corollary \ref{cor:poa}). 
On the other hand, our algorithm constructs exponential weights based on the reward estimation at the action level. 
This makes sure that the joint policy $\omega^t$ can converge to a Nash equilibrium fast  when it is nearby (Theorem \ref{thm:nash} and \ref{thm:stoc_nash}).

In summary, our results indicate that adopting Algorithm \ref{alg} in a congestion game leads to favorable outcomes. Each player enjoys favorable cumulative individual rewards, without compromising the overall social welfare. Moreover, when the joint policy is close to the Nash equilibrium, players can quickly converge to a stable equilibrium state, avoiding inefficient and chaotic dynamics. 
\section{Theoretical Results}

In this section, we present our main theoretical results.


\subsection{Sublinear individual regret with linear dependency on $F$}
Our first theorem shows that each individual player enjoys a sublinear individual regret.

\begin{restatable}[]{thm}{regret}\label{thm:regret}
    Under semi-bandit feedback, Algorithm \ref{alg} with $\eta = \frac{1}{\sqrt{T}}$ satisfies that for all $i\in [n]$,  
    \begin{equation*}
     \text{Regret}_i(T)  =  O \left(k F \sqrt{T} \right).
    \end{equation*}
\end{restatable}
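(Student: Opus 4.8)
The plan is to reduce the individual regret to a standard exponential-weights (Hedge/EXP3-style) regret bound, but computed at the \emph{facility level} rather than the action level. The key observation is that the action-level policy $\omega_i^{t}$ in Equation~(\ref{eq:alg7}) has exactly the product form $\omega_i^{t+1}(a) \propto \prod_{f \in a} \tilde\omega_i^{t}(f)$ with $\tilde\omega_i^{t}(f) = \exp(\eta \sum_{j\le t}\tilde y_i^{j}(f))$. This is precisely the distribution that the exponential-weights algorithm over the combinatorial set $\gA_i$ would produce if it were fed the \emph{aggregated} losses $\sum_{f\in a}(1-\tilde y_i^j(f))$. So the first step is to set up loss variables $\ell_i^t(f) = 1 - \tilde y_i^t(f) = \frac{\mathbb{I}\{f\in a_i^t\}(1-R^f(a^t))}{q_i^t(f)}$ and $L_i^t(a) = \sum_{f\in a}\ell_i^t(f)$, verify via Assumption~\ref{asmp:sto_rewards} that $\Expect[\ell_i^t(f)\mid \mathcal H_t] = 1 - r^f(a^t)$ (the importance-weighting by $q_i^t(f)$ exactly cancels the indicator), and hence $\Expect[L_i^t(a)\mid\mathcal H_t] = \sum_{f\in a}(1-r^f(a^t)) = k - r_i(a,a_{-i}^t)$ for any fixed $a$.

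Second, I would invoke the standard potential-function (``free energy'') analysis of exponential weights applied to the action set $\gA_i$ with loss vectors $L_i^t(\cdot)$. Writing $\Phi^t = \frac{1}{\eta}\log \sum_{a\in\gA_i}\exp(-\eta\sum_{j\le t}L_i^j(a))$, the telescoping argument gives, for any fixed comparator $a^\star$,
\begin{equation*}
\sum_{t=1}^T \sum_{a\in\gA_i}\omega_i^t(a) L_i^t(a) - \sum_{t=1}^T L_i^t(a^\star) \le \frac{\log A}{\eta} + \frac{\eta}{2}\sum_{t=1}^T \Expect_{a\sim\omega_i^t}\!\big[L_i^t(a)^2\big],
\end{equation*}
where the second-order term requires care because $\ell_i^t(f)$ can be as large as $1/q_i^t(f)$, which is unbounded. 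The crucial variance computation is to bound $\Expect_{a_i^t\sim\omega_i^t}\Expect_{a\sim\omega_i^t}[L_i^t(a)^2]$; expanding the square, using that $(1-R^f)^2\le 1-R^f\le 1$, and that $\Expect_{a_i^t}[\mathbb{I}\{f\in a_i^t\}/q_i^t(f)^2] \le 1/q_i^t(f)$, one gets cross terms $\sum_{f,g\in a}\Expect[\ell_i^t(f)\ell_i^t(g)]$. Summing $\Expect_{a\sim\omega_i^t}[\,\cdot\,]$ over $a$ pulls in factors $q_i^t(f)$ that cancel the $1/q_i^t(f)$ blow-up, leaving each facility contributing $O(1)$ and each action having $k$ facilities, so the per-round variance is $O(k^2 F)$ or better — this is exactly where the $kF$ (rather than $F^k$) dependence is won. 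I expect this variance bound to be the main obstacle: one must handle the correlation between the sampled action $a_i^t$ (which enters $\ell_i^t$) and the action $a$ over which the expectation is taken, and show the importance weights telescope against the sampling probabilities cleanly; a slightly loose bookkeeping here would reintroduce a $1/q_{\min}$ or a combinatorial factor.

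Third, I would take conditional expectations on both sides, use the unbiasedness from Step~1 to replace $\sum_a \omega_i^t(a)L_i^t(a)$ by $k - \Expect[r_i(\omega_i^t,\omega_{-i}^t)\mid\mathcal H_t]$ and $L_i^t(a^\star)$ by $k - r_i(a^\star,\omega_{-i}^t)$, so the $k$'s cancel and the left side becomes exactly $\Expect[\mathrm{Regret}_i(T)]$ against the comparator $a^\star$; taking $a^\star$ to be the best action in hindsight (the max in Definition~\ref{def:individual_regret}) and then the supremum over $\{\omega_{-i}^t\}$ is handled since the bound is uniform in the opponents' play. Finally, plug in $\eta = 1/\sqrt T$: the first term gives $\sqrt T \log A = \sqrt T \log\binom{F}{k} = O(k\log F\,\sqrt T)$ and the second gives $\frac{1}{\sqrt T}\cdot T\cdot O(k^2 F) = O(k^2 F\sqrt T)$; absorbing logarithmic factors (or with a slightly more careful variance bound yielding $O(kF)$ per round) this collapses to the claimed $O(kF\sqrt T)$. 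A remark at the end would note that the expectation can be removed by a martingale concentration (Azuma/Freedman) argument on $\sum_t(\sum_a\omega_i^t(a)L_i^t(a) - \Expect[\cdot\mid\mathcal H_t])$ if a high-probability statement is desired, but the theorem as stated is in expectation.
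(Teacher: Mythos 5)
Your proposal is correct and follows essentially the same route as the paper: the potential $\Phi^t$ you describe is exactly $\frac{1}{\eta}\log Q_i^t$ for the paper's normalizer $Q_i^t=\sum_{a}\prod_{f\in a}\tilde\omega_i^t(f)$, the telescoping/second-order expansion is the same, and the decisive step — summing $\omega_i^t(a)$ over actions containing $f$ to produce a $q_i^t(f)$ that cancels the $1/q_i^t(f)^2$ importance weight — is precisely the paper's Equations (\ref{Eq_999})--(\ref{Eq_997}). The only bookkeeping difference is that the paper applies Cauchy--Schwarz to get a per-round second moment of $k+kF$ directly (rather than your $O(k^2F)$ with a noted refinement), and bounds $\log A_i\le kF$ rather than the sharper $k\log F$ you use.
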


Compared with naively applying exponential weights on the congestion game (with a regret of $\Tilde{O}\left(\sqrt{A_i T}\right)$ \citep{auer2002nonstochastic}), we can see that Theorem \ref{thm:regret} reduces the factor $\sqrt{A_i}$ to $kF$.
This is a significant improvement since $A \approx F^k$ is exponentially larger than $kF$.

Though there exist some works to achieve a similar regret upper bound \citep{daskalakis2021near}, we emphasize that these algorithms only work in the full-information setting, but not the semi-bandit setting.
Besides, our algorithm can converge to a strict Nash equilibrium fast, while existing ones can only guarantee to converge to a coarse correlated equilibrium (please see details in Section 5.2).

\paragraph{Tight approximation to optimal welfare}
One immediate consequence of Theorem \ref{thm:regret} is that our proposed algorithm can achieve a tight approximation to the optimal social welfare. 
\begin{restatable}[]{cor}{poa}\label{cor:poa}
Under semi-bandit feedback, if the congestion game is $(\lambda, \mu)$-smooth, then Algorithm \ref{alg} with $\eta = \frac{1}{\sqrt{T}}$ satisfies 
    \begin{align*}
        \frac{1}{T} \sum^T_{t=1} W(\omega^t) \geq \frac{\lambda}{1 + \mu} \mathrm{OPT} - O \left(\frac{nk F }{\sqrt{T}(1+\mu)} \right)\,.
    \end{align*}
\end{restatable}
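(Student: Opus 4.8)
The plan is to combine the no-regret guarantee of Theorem \ref{thm:regret} with the $(\lambda,\mu)$-smoothness of the game in the standard way, following the ``price of anarchy via smoothness'' argument of \citet{roughgarden2009intrinsic} and \citet{syrgkanis2015fast}, but carried out in expectation over the randomized policies. First I would fix the joint action $a^\ast$ whose existence is guaranteed by Definition \ref{def:smooth}, and write $a_i^\ast$ for player $i$'s component. For each $t$, smoothness applied to the realized joint action $a^t$ gives $\sum_{i} r_i(a_i^\ast, a_{-i}^t) \geq \lambda\,\mathrm{OPT} - \mu\,W(a^t)$. Taking expectations over $a^t \sim \omega^t$ (using Assumption \ref{asmp:sto_rewards} so that expected realized rewards equal the $r^f$ values) and summing over $t = 1,\dots,T$, I get
\begin{align*}
\sum_{t=1}^T \sum_{i=1}^n \mathbb{E}\!\left[r_i(a_i^\ast, \omega_{-i}^t)\right] \geq \lambda T\,\mathrm{OPT} - \mu \sum_{t=1}^T \mathbb{E}\!\left[W(\omega^t)\right].
\end{align*}

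Next I would bound the left-hand side using the individual regret guarantee. Since the fixed deviation to $a_i^\ast$ is one particular choice of $\omega_i$ in the max defining $\mathrm{Regret}_i(T)$ (Definition \ref{def:individual_regret}), we have $\sum_{t=1}^T r_i(a_i^\ast, \omega_{-i}^t) \leq \sum_{t=1}^T r_i(\omega_i^t, \omega_{-i}^t) + \mathrm{Regret}_i(T)$; here I need to be slightly careful that Theorem \ref{thm:regret} is stated for the realized trajectory, so I would take expectations and invoke that the $O(kF\sqrt T)$ bound holds (in expectation over the algorithm's randomness). Summing over $i$ and noting $\sum_i r_i(\omega^t) = W(\omega^t)$, the left side is at most $\sum_{t=1}^T \mathbb{E}[W(\omega^t)] + \sum_i \mathrm{Regret}_i(T) = \sum_{t=1}^T \mathbb{E}[W(\omega^t)] + O(nkF\sqrt T)$. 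Combining the two displays yields
\begin{align*}
(1+\mu)\sum_{t=1}^T \mathbb{E}\!\left[W(\omega^t)\right] \geq \lambda T\,\mathrm{OPT} - O\!\left(nkF\sqrt T\right),
\end{align*}
and dividing through by $(1+\mu)T$ gives exactly the claimed bound $\frac{1}{T}\sum_t W(\omega^t) \geq \frac{\lambda}{1+\mu}\mathrm{OPT} - O\!\big(\frac{nkF}{\sqrt T(1+\mu)}\big)$ (up to replacing $W(\omega^t)$ by its expectation, or noting the statement is about the expected welfare sequence).

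The argument is essentially routine once Theorem \ref{thm:regret} is in hand; the only real subtlety — and the step I would be most careful about — is the interplay between expectations and the max in the regret definition. The regret in Definition \ref{def:individual_regret} takes a max over a single fixed $\omega_i$ against the actual realized sequence $\{\omega_{-i}^t\}$, whereas the smoothness inequality naturally produces the comparator $a_i^\ast$ against the same realized sequence; so the comparator $a_i^\ast$ is admissible and no extra supremum/union bound is needed, but one must make sure the $O(kF\sqrt T)$ regret bound is used as a bound on $\mathbb{E}[\mathrm{Regret}_i(T)]$ (or holds pathwise) rather than only for a data-independent comparator. Assuming Theorem \ref{thm:regret} delivers this, the rest is just linearity of expectation and rearrangement.
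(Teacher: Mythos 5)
Your proposal is correct and follows essentially the same route as the paper's proof: both instantiate the regret comparator at the smoothness witness $a_i^\ast$, sum the resulting inequalities over players, invoke Definition \ref{def:smooth} in expectation, and rearrange using Theorem \ref{thm:regret}, exactly as in Proposition 2 of \citet{syrgkanis2015fast}. Your handling of the expectation/comparator subtlety is if anything slightly more careful than the paper's write-up, but there is no substantive difference in the argument.
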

We remark that $\frac{\lambda}{1 + \mu} \mathrm{OPT}$ is shown to be a tight approximation of optimal social welfare possible by offline algorithms that attain Nash equilibrium in congestion games \citep{roughgarden2009intrinsic}. Therefore, the above result shows that our algorithm is as efficient as any offline Nash policy asymptotically. 

\paragraph{Technical highlight of Theorem \ref{thm:regret}} 

In classical proofs of exponential weights algorithms, the regret is closely linked to the quadratic term of the reward estimator, i.e., $\mathbb{E}_t[\sum_{a_i \in \mathcal{A}_i} \omega_i^t(a_i) \left( \tilde{y}_i^t(a_i)\right)^2 ]$\citep{auer2002nonstochastic,lattimore2020bandit}, where $\tilde{y}_i^t(a_i)$ is the estimated reward of action $a_i$ at time step $t$, and $\mathbb{E}_t[\cdot]$ denotes the conditional expectation over all history up to time $t$. If we can upper bound this term by a constant polynomial with $k$ and $F$, then we can remove the exponential factor in the individual regret upper bound.  

With our facility level estimator (Eq. (\ref{eq:alg6})), $\tilde{y}_i^t(a_i) = \sum_{f \in a_i} \tilde{y}_i^t(f)$, the above term could be upper bounded as:  
\begin{align}
\nonumber   &\sum_{a_i \in \mathcal{A}_i} \omega_i^t(a_i) \bigg(\sum_{f \in a_i} \tilde{y}_i^t(f)\bigg)^2 \\
\label{Eq_999}    \leq \ & k \sum_{a_i \in \mathcal{A}_i} \omega_i^t(a_i) \sum_{f \in a_i} \bigg(1 - \frac{\mathbb{I} \{f \in a_i^t\}(1 - R^f(a^t_i, a^t_{-i}))}{q_i^t(f)}\bigg)^2 \\
\label{Eq_998} = \ & k+k\sum_{a_i \in \mathcal{A}_i}  \omega_i^t(a_i) \sum_{f \in a_i}\left(\bigg(\frac{\mathbb{I}\left\{f \in a_i^t\right\}(1 - R^f(a^t_i, a^t_{-i}))}{q_i^t(f)}\bigg)^2-\frac{2 \mathbb{I}\left\{f \in a_i^t\right\}(1 - R^f(a^t_i, a^t_{-i}))}{q_i^t(f)}\right) \\
\nonumber \leq \ & k+k\sum_{f \in \mathcal{F}}\bigg(\frac{\mathbb{I}\left\{f \in a_i^t\right\}(1 - R^f(a^t_i, a^t_{-i}))}{q_i^t(f)}\bigg)^2 \sum_{a_i \in \mathcal{A}, f \in a_i} \omega_i^t(a_i)\\
\label{Eq_997}\leq \ & k+k\sum_{f \in \mathcal{F}}\bigg(\frac{\mathbb{I}\left\{f \in a_i^t\right\}(1 - R^f(a^t_i, a^t_{-i}))}{q_i^t(f)}\bigg)^2 q_i^t(f)\,,
\end{align}
where Eq. (\ref{Eq_999}) is by the Cauchy-Shwarz inequality, Eq. (\ref{Eq_998}) is by noting $\sum_{a_i \in \mathcal{A}_i} \omega_i^t(a_i) = 1$, and Eq. (\ref{Eq_997}) is by the definition of $q_i^t(f)$.
Notice that $(1 - R_i^f(a^t))^2$ is upper bounded by $1$ and taking a conditional expectation over all history up to time $t$ yields (denote $\mathbb{E}_t$ as the conditional expectation operator, and the $q_i^t(f)$ cancels out the $\mathbb{I}\{f \in a_i^t\}^2$). 
\begin{align}\label{eq:quadratic}
    \mathbb{E}_t\Bigg[\sum_{a_i \in \mathcal{A}_i} \omega_i^t(a_i) \bigg(\sum_{f \in a_i} \tilde{y}_i^t(f)\bigg)^2 \Bigg] 
    \leq \  k + kF \,,
\end{align}
and this is an upper bound polynomial with $k$ and $F$.

From the above explanation, one can see the necessity of estimating the rewards at the facility level. 
If the reward estimator is constructed at the action level, that is, an estimator of the form 
\begin{equation*}
    \tilde{y}_i^t(a_i) = k - \frac{\mathbb{I}\left\{a_i = a_i^t\right\}\left(k- \sum_{f \in a_i^t} R_i^f\left(a_i^t, a_{-i}^t\right)\right)}{\omega_i^t(a_i)},
\end{equation*}
Consider the case that $R_i^f\left(a_i^t, a_{-i}^t\right)$ is always 0 and at the beginning $\omega_i^t(a_i) = 1/|\mathcal{A}_i|$, then this quadratic term $\mathbb{E}_t[\sum_{a_i \in \mathcal{A}_i} \omega_i^t(a_i) \left( \tilde{y}_i^t(a)\right)^2 ]$ is approximately
\begin{align*}
     \mathbb{E}_t\Bigg[\sum_{a_i \in \mathcal{A}_i} \omega_i^t(a_i) \left( \tilde{y}_i^t(a)\right)^2 \Bigg] \approx \sum_{a_i \in \mathcal{A}_i} \omega_i^t(a_i) \cdot \left(\omega_i^t(a_i) \left({k\over \omega_i^t(a_i)}\right)^2\right) = k^2|\mathcal{A}_i| \,,
\end{align*}
which scales with the number of actions and is thus always exponentially large. 

\subsection{Fast convergence to strict Nash equilibrium}
Beyond the low individual regret, we also show that our algorithm can produce a set of policies $\{\omega_i^t\}^T_{t=1}$ that converges fast to a strict Nash equilibrium $\omega_i^\ast$ in the full-information setting.




We first consider a simple case, where each player observes the expected rewards directly (which also take expectation on the randomness of $a_{-i}^t \sim \omega_{-i}^t$, i.e. $\mathbb{E}_{a^t_{-i} \sim \omega_{-i}^t} [r^f(a_i, a^t_{-i})]$ for any $a_i$). In addition, we assume that any $k$ facilities form an action. 
\begin{restatable}[]{thm}{thmnash}\label{thm:nash}
Consider the case where each player receives $\mathbb{E}_{a^t_{-i} \sim \omega_{-i}^t} [r^f(a^t_i, a^t_{-i})], \forall a_i \in \mathcal{A}_i, \forall f \in a_i$ in a game that permits a strict Nash equilibrium $\omega^\ast = (\omega_1^\ast, \cdots, \omega_n^\ast)$, and let $\Tilde{y}_i^{t}(f) =  \mathbb{E}_{a^t_{-i} \sim \omega_{-i}^t} [r^f(a^t_i, a^t_{-i})]$ in Line 6 of Algorithm \ref{alg}. Suppose $\Tilde{y}_i^0(f), \forall i \in [n]$ is initialized such that $\omega^0 \in U_{M} \subseteq U_\epsilon$, then for any $i \in [n]$ and any $t$, we have \[\left\| \omega_i^t - \omega_i^\ast\right\|_1
        \leq  2(kF \exp(- M - \eta \epsilon t)) \,,\]
    where $M \geq \left|\log\left(\frac{\epsilon}{2kF}\right)\right| $, and $\epsilon$ is a constant that is game-dependent only. 
\end{restatable}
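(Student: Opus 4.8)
The plan is to reduce these $n$-player dynamics over an exponentially large action set to tracking, for each player $i$ separately, a single scalar ``margin'' $\beta_i^t := \exp\!\big(\eta\,[\,\max_{f'\notin a_i^\ast} Y_i^t(f') - \min_{f\in a_i^\ast} Y_i^t(f)\,]\big)$, where $a_i^\ast:=\mathrm{supp}(\omega_i^\ast)$ is the Nash action of player $i$ and $Y_i^t(f):=\tilde y_i^0(f)+\sum_{j=1}^{t}\tilde y_i^j(f)$ is the cumulative facility score feeding the exponential weights of \eqref{eq:alg7}. I will show (i) $\|\omega_i^{t+1}-\omega_i^\ast\|_1$ is controlled by $\beta_i^t$; (ii) $\beta_i^t$ contracts geometrically as long as the joint policy stays in a fixed basin $U_\epsilon$ of $\omega^\ast$; and (iii) by induction the dynamics never leave $U_\epsilon$, which closes the argument.

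The first step is to read off the combinatorial structure of a \emph{strict} equilibrium. Since $\omega^\ast$ is pure and strict, $\omega_i^\ast=\mathbf 1_{a_i^\ast}$, and --- crucially using that any $k$ facilities form an action --- comparing $a_i^\ast$ with the swap action $a_i^\ast\setminus\{f\}\cup\{f'\}$ for $f\in a_i^\ast,\ f'\notin a_i^\ast$ upgrades the action-level strict inequality to a facility-level gap $\delta_i:=\min_{f\in a_i^\ast}v_i^f-\max_{f'\notin a_i^\ast}v_i^{f'}>0$, where $v_i^f$ is the expected value of $f$ to player $i$ at $\omega^\ast$. Since $\omega_{-i}\mapsto\mathbb{E}_{a_{-i}\sim\omega_{-i}}[r^f(\cdot,a_{-i})]$ is multilinear, hence $L$-Lipschitz in $\|\cdot\|_1$ for a game-dependent $L$, setting $\epsilon:=\min_i\delta_i/(L+1)$ and $U_\epsilon:=\{\omega:\|\omega_i-\omega_i^\ast\|_1\le\epsilon\ \forall i\}$ makes the per-step facility gap $\min_{f\in a_i^\ast}\tilde y_i^t(f)-\max_{f'\notin a_i^\ast}\tilde y_i^t(f')\ge\epsilon$ hold throughout $U_\epsilon$; defining $U_M$ as the neighborhood on which the analogous gap for the \emph{initialization} obeys $\eta[\max_{f'\notin a_i^\ast}\tilde y_i^0(f')-\min_{f\in a_i^\ast}\tilde y_i^0(f)]\le -M$ (that is, $\beta_i^{-1}\le e^{-M}$), the same Lipschitz bound gives $U_M\subseteq U_\epsilon$ once $M\ge|\log(\epsilon/2kF)|$.

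For (i) I expand the softmax in \eqref{eq:alg7}: an action at ``swap-distance'' $\ell$ from $a_i^\ast$ contributes a weight ratio at most $(\beta_i^t)^\ell$ relative to $a_i^\ast$, and there are $\binom k\ell\binom{F-k}\ell\le (kF)^\ell$ of them, so $1-\omega_i^{t+1}(a_i^\ast)\le\sum_{\ell\ge1}(kF\beta_i^t)^\ell\le 2kF\beta_i^t$ whenever $kF\beta_i^t\le\tfrac12$, whence $\|\omega_i^{t+1}-\omega_i^\ast\|_1=2(1-\omega_i^{t+1}(a_i^\ast))=O(kF\beta_i^t)$. For (ii)--(iii) I induct on $t$ with hypothesis $\omega^0,\dots,\omega^t\in U_\epsilon$: then every estimator $\tilde y_i^1,\dots,\tilde y_i^t$ is formed at a policy in $U_\epsilon$, so subadditivity of $\max$ and $-\min$ across the sum defining $Y_i^t$, together with $\omega^0\in U_M$ for the $j=0$ term, gives $\eta[\max_{f'\notin a_i^\ast}Y_i^t(f')-\min_{f\in a_i^\ast} Y_i^t(f)]\le -M-\eta\epsilon t$, i.e.\ $\beta_i^t\le e^{-M-\eta\epsilon t}$; plugging into (i) yields the stated bound and in particular $\|\omega_i^{t+1}-\omega_i^\ast\|_1\le 2kFe^{-M}\le\epsilon$, so $\omega^{t+1}\in U_\epsilon$ and the induction continues. (A one-step offset in the exponent relative to the statement is merely an artifact of how $Y_i^t$ is indexed against $\omega^t$.)

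The main obstacle is that $\epsilon$ must play two roles at once --- it is both the surviving strict-dominance margin and the radius of the invariant set --- so the continuity estimate in the first step has to be arranged so that the gap persists on a ball whose radius is itself at least $\epsilon$; this is exactly what forces the choice $\epsilon\sim\delta/(L+1)$ and, propagating through (i), the threshold $M\ge|\log(\epsilon/2kF)|$. The remaining subtleties are bookkeeping: that $\tilde y_i^j$ is built from $\omega^j$ (so that ``$\omega^j\in U_\epsilon$'' is precisely what licenses the per-step gap at step $j$), that the initialization term is governed by $U_M$ rather than $U_\epsilon$, and that the smallness condition $kF\beta_i^t\le\tfrac12$ needed in (i) is itself guaranteed by the induction hypothesis --- a mild bootstrap that the lower bound on $M$ resolves.
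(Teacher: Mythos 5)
Your proposal is correct and its skeleton coincides with the paper's: the same invariant-neighborhood argument (the paper's Lemmas \ref{lem:neighborhood} and \ref{lem:nash_UM}), the same observation that for a swap action $a_i^\ast\setminus\{f_m\}\cup\{f_\ell\}$ the action-level strict-equilibrium gap collapses to a pairwise facility-level gap $\tilde y_i^t(f_m)-\tilde y_i^t(f_\ell)\ge\epsilon$ that accumulates to $M+\eta\epsilon t$, and the same final conversion $\|\omega_i^t-\omega_i^\ast\|_1=2(1-\omega_i^t(a_i^\ast))$. The one step where you genuinely diverge is the combinatorial bound on $1-\omega_i^t(a_i^\ast)$, which is exactly the step the paper flags as its ``technical challenge.'' The paper orders facilities by weight and uses the product-of-marginals factorization
$\omega_i^t(a_i^\ast)\ge\prod_{m\in[k]}\bigl(1+\sum_{\ell>k}\exp(\eta\sum_j(\tilde y_i^j(f_\ell)-\tilde y_i^j(f_m)))\bigr)^{-1}$,
whereas you group the competing actions by swap distance $\ell$ from $a_i^\ast$ and sum the geometric series $\sum_{\ell\ge1}\binom{k}{\ell}\binom{F-k}{\ell}(\beta_i^t)^\ell\le 2kF\beta_i^t$. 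Both routes rely on the same structural assumption (every $k$-subset is an action, so swap actions exist) and both replace the naive $|\mathcal{A}_i|$ factor by $kF$; yours costs an extra constant factor of $2$ and the bootstrap condition $kF\beta_i^t\le\tfrac12$ (harmlessly absorbed into the lower bound on $M$), while the paper's factorization gets the $kF$ prefactor directly without that side condition. Your per-step gap phrased through $\max_{f'\notin a_i^\ast}$ and $\min_{f\in a_i^\ast}$ is equivalent to the paper's pairwise statement $\tilde z_i^t(a_i)\le -M-\eta\epsilon t$ for swap actions, and the one-index offset you flag is indeed only a bookkeeping discrepancy also present in the paper's own conventions.
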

\begin{rem}
    We note that the convergence rate of the algorithm can be improved by increasing the step size $\eta$. This is because when each player receives expected rewards, the player can take greedy steps toward the equilibrium strategy. 
    This agrees with greedy strategies that are previously employed to reach strict Nash equilibrium \citep{cohen2016exponentially}.
    However, such greedy policies would not work in the presence of reward uncertainty, as we will discuss in Theorem \ref{thm:stoc_nash}.
\end{rem}

It is worth mentioning that the convergence rate of our algorithm does not rely on the number of actions $A_i$, but rather solely on the number of facilities $F$. This is an improvement over the previous findings for exponential weights algorithms with non-combinatorial action spaces in the context of a congestion game, where the rate is linearly dependent on the number of actions \citep{cohen2016exponentially}. The reason for this is the utilization of our facility-level reward estimation technique once again.

Previous studies on the convergence rate of congestion games \citep{chen2015playing,chen2016generalized} have established a linear rate of convergence when the game possesses a smooth potential function and the algorithm is given an appropriate initial starting point. 
The potential function provides a means to capture the incentives of all players to modify their actions and can be used to characterize the dynamics in policy updates. 
Assuming the smoothness of the potential function implies optimization on a simpler policy optimization landscape.
In contrast, our algorithm achieves a much faster rate of convergence, and this convergence rate still holds even in the absence of a smooth potential function. 
We adopt a different approach where we directly argue through the algorithm update rule that the updated policy will always fall within a neighborhood around the Nash equilibrium. 
This bypasses the need for a smoothness potential function and demonstrates the effectiveness of our approach.


In addition, we remark that though some variants of Mirror Descent (MD) or Follow-the-Regularized-Leader (FTRL) algorithms are also proven to enjoy sublinear regret with logarithmic dependency on action space in the full information setting \citep{daskalakis2021near},
these results only imply convergence to an approximate coarse correlated equilibrium and do not enjoy convergence to Nash equilibrium. 
In comparison, Nash equilibrium is much more stable, as the dynamic will remain there unless external factors change, while coarse correlated equilibrium may be more sensitive to small changes in the correlation method, which can lead to deviation from the equilibrium \citep{nisan2007algorithmic}.


To prove Theorem \ref{thm:nash}, we first identify that there exists a neighborhood around the strict Nash equilibrium, such that for any player $i$, his action in the strict Nash equilibrium is the only optimal choice. 
\begin{lem}\label{lem:neighborhood}
    If there exists a strict Nash equilibrium $a^\ast = (a_1^\ast, \ldots, a_n^\ast)$, then there exists $\epsilon > 0$ and a neighborhood $U_\epsilon$ of $a^\ast$, such that for all $\Tilde{\omega} = (\Tilde{\omega}_{i}, \Tilde{\omega}_{-i}) \in U_\epsilon$, \[ r_i(a_i^\ast, \Tilde{\omega}_{-i}) - r_i(a_i, \Tilde{\omega}_{-i}) \geq \epsilon \,, \quad \forall i \in [n]\,, a_i \in \gA_i \,, a_i \neq a_i^\ast \,,\]
    where $ r_i(a_i, \omega_{-i})$ is defined as $\mathbb{E}_{a_{-i} \sim \omega_{-i}}[ r_i(a_i, a_{-i})]$.
\end{lem}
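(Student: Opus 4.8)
The plan is to extract a uniform positive gap at the pure profile $a^\ast$ from the strictness of the equilibrium, and then propagate that gap to an entire neighborhood using the fact that each player's payoff, as a function of the opponents' mixed strategies, is multilinear and hence continuous.

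First I would record the gap at $a^\ast$ itself. By Definition~\ref{def:nash}, strictness of $a^\ast$ means $r_i(a_i^\ast, a_{-i}^\ast) - r_i(a_i, a_{-i}^\ast) > 0$ for every $i \in [n]$ and every $a_i \in \gA_i$ with $a_i \neq a_i^\ast$. Since there are finitely many players and finitely many actions,
\[
\gamma \;:=\; \min_{i \in [n]}\ \min_{a_i \in \gA_i,\ a_i \neq a_i^\ast}\ \bigl( r_i(a_i^\ast, a_{-i}^\ast) - r_i(a_i, a_{-i}^\ast) \bigr)
\]
is the minimum of finitely many strictly positive numbers, so $\gamma > 0$; I will take $\epsilon = \gamma/2$.

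Second, for each pair $(i,a_i)$ with $a_i \neq a_i^\ast$ I define $g_{i,a_i}(\tilde\omega) := r_i(a_i^\ast, \tilde\omega_{-i}) - r_i(a_i, \tilde\omega_{-i})$. Because $r_i(a_i,\omega_{-i}) = \mathbb{E}_{a_{-i}\sim\omega_{-i}}[r_i(a_i,a_{-i})]$ is an expectation under the product distribution $\prod_{j\neq i}\omega_j$, each $g_{i,a_i}$ is multilinear, hence continuous, on $\prod_{j=1}^n \Delta(\gA_j)$, and depends only on $\tilde\omega_{-i}$. Identifying the pure profile $a^\ast$ with the product of point masses $\delta_{a^\ast}$, we have $g_{i,a_i}(\delta_{a^\ast}) \ge \gamma$ for all of the finitely many pairs $(i,a_i)$. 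By continuity of each $g_{i,a_i}$ at $\delta_{a^\ast}$ and finiteness of the index set, there is an open neighborhood $U_\epsilon \ni \delta_{a^\ast}$ on which $g_{i,a_i}(\tilde\omega) \ge \gamma/2 = \epsilon$ simultaneously for every $i$ and every $a_i \neq a_i^\ast$, which is precisely the claimed inequality. If desired, $U_\epsilon$ can be made explicit: since $r^f \in [0,1]$ gives $r_i(a_i,\cdot)\in[0,k]$ and multilinearity yields $|r_i(a_i,\omega_{-i}) - r_i(a_i,\omega'_{-i})| \le k\sum_{j\neq i}\|\omega_j - \omega'_j\|_1$, one may take $U_\epsilon = \{\tilde\omega : \|\tilde\omega_j - \delta_{a_j^\ast}\|_1 < \gamma/(4kn)\ \text{for all } j\}$.

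There is no deep obstacle here; the argument is a compactness/continuity fact about the multilinear extension of the payoffs. The one point that needs care is that the neighborhood must work \emph{uniformly} over the (finite but possibly large) family of deviating pairs $(i,a_i)$ — this is handled by forming the minimum $\gamma$ \emph{before} choosing the radius, so that shrinking $U_\epsilon$ controls all $g_{i,a_i}$ at once, and by working with the multilinear extension rather than the finite game so that "neighborhood of $a^\ast$" is meaningful.
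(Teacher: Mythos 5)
Your proof is correct and follows essentially the same route as the paper's: extract a uniform positive gap $2\epsilon$ at the pure profile from strictness and finiteness, then propagate it to a neighborhood by continuity of the multilinear payoff extension. The paper's version is a two-line sketch of exactly this argument; yours simply fills in the details (the explicit minimum $\gamma$, the Lipschitz bound, and the uniformity over the finitely many deviating pairs).
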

\begin{proof}
    By the existence of the strict Nash equilibrium, there exists $\epsilon > 0$ such that $\forall a_i \in \gA_i$, $r_i\left(a_i, a_{-i}^\ast\right) \le r_i(a_i^\ast, a_{-i}^\ast) - 2\epsilon$. Then by continuity, we know that for $(\Tilde{\omega}_{i}, \Tilde{\omega}_{-i}) \in U_\epsilon$, $r_i(a_i,\Tilde{\omega}_{-i}) \leq r_i(a_i^\ast, \Tilde{\omega}_{-i}) - \epsilon$. 
\end{proof}
Moreover, if the difference in reward estimator $\Tilde{z}_i^t(a_i) = \sum^t_{j=0}\left(\sum_{f \in a_i} \Tilde{y}_i^j (f) - \sum_{f^\prime \in a^\ast_i} \Tilde{y}_i^j (f^\prime) \right)$ is upper bounded by some small enough constant, then the induced policy of Algorithm \ref{alg} falls into the neighborhood set $U_\epsilon$. 

\begin{restatable}[]{lem}{nashUM}\label{lem:nash_UM}
    Let $\Tilde{z}_i^t(a_i) = \sum^t_{j=0}\left(\sum_{f \in a_i} \Tilde{y}_i^j (f) - \sum_{f^\prime \in a^\ast_i} \Tilde{y}_i^j (f^\prime) \right)$, and define 
    \[U_{M} = \left\{\omega^t \textup{ computed by Algorithm \ref{alg}} \mid \Tilde{z}_i^t(a_i) \leq - M \,, \forall a_i \neq a_i^\ast, \forall i \in [n] \right\} \,.\] 
    For sufficiently large $M$, $U_{M} \subseteq U_\epsilon$. Moreover, following the updates of Algorithm \ref{alg}, if 
    $\omega^t \in U_{M}$, then $\omega^{t+1} \in U_{M} $. 
\end{restatable}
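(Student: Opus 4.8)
The plan is to read both parts directly off the exponential-weights update~\eqref{eq:alg7}. Since $\omega_i^{t+1}(a)\propto\prod_{f\in a}\tilde\omega_i^{t}(f)=\exp\big(\eta\sum_{f\in a}\sum_{j=0}^{t}\tilde y_i^{\,j}(f)\big)$, for every player $i$ and every action $a_i$ one has the identity
\[
\frac{\omega_i^{t+1}(a_i)}{\omega_i^{t+1}(a_i^\ast)}=\exp\big(\eta\,\tilde z_i^{\,t}(a_i)\big),\qquad \tilde z_i^{\,t}(a_i)=\sum_{j=0}^{t}\Big(\textstyle\sum_{f\in a_i}\tilde y_i^{\,j}(f)-\sum_{f'\in a_i^\ast}\tilde y_i^{\,j}(f')\Big).
\]
In particular, being in $U_M$ says exactly that every non-equilibrium action is assigned probability at most $\exp(\eta\,\tilde z_i^{\,t}(a_i))\le e^{-\eta M}$, uniformly over $i$ and $a_i\neq a_i^\ast$.

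First I would prove $U_M\subseteq U_\epsilon$. From the identity, $\|\omega_i^{t+1}-\omega_i^\ast\|_1=2\big(1-\omega_i^{t+1}(a_i^\ast)\big)=2\sum_{a_i\neq a_i^\ast}\omega_i^{t+1}(a_i)$, and the naive union bound already gives $\le 2(A-1)e^{-\eta M}\to0$ as $M\to\infty$. Because Theorem~\ref{thm:nash} assumes that every $k$-subset of facilities is an action, one can do better and get $\lesssim kF\,e^{-\eta M}$: applying $\tilde z_i^{\,t}(a_i)\le -M$ to the single-facility swaps $a_i=(a_i^\ast\setminus\{f\})\cup\{g\}$ yields the per-facility gap $\sum_{j}\tilde y_i^{\,j}(g)\le\sum_{j}\tilde y_i^{\,j}(f)-M$ for all $f\in a_i^\ast$, $g\notin a_i^\ast$, after which writing the normalizer $\sum_a\prod_{f\in a}\exp(\eta\sum_j\tilde y_i^{\,j}(f))$ as an elementary symmetric polynomial and grouping its terms by how many facilities they take outside $a_i^\ast$ bounds it by a geometric-type series dominated by its $kF$-many single-swap terms. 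Either way, once each $\|\omega_j-\omega_j^\ast\|_1$ is below the radius one may take for $U_\epsilon$ in Lemma~\ref{lem:neighborhood} — a legitimate choice since $\omega_{-i}\mapsto r_i(a_i,\omega_{-i})$ is a polynomial in the entries of $\omega_{-i}$, hence Lipschitz on the compact product of simplices, so the $2\epsilon$ gap at $a^\ast$ persists in a whole $\ell_1$-neighborhood — the policy lies in $U_\epsilon$. Any sufficiently large $M$ thus works; the explicit threshold invoked later, $M\ge|\log(\epsilon/2kF)|$, is of the same logarithmic order in $k,F$.

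Next I would prove forward invariance. Assume $\omega^t\in U_M$. By the previous step $\omega^t\in U_\epsilon$, so Lemma~\ref{lem:neighborhood} gives $r_i(a_i^\ast,\omega_{-i}^t)-r_i(a_i,\omega_{-i}^t)\ge\epsilon$ for all $i$ and $a_i\neq a_i^\ast$. In the setting of Theorem~\ref{thm:nash}, $\tilde y_i^{\,t}(f)=\mathbb{E}_{a_{-i}^t\sim\omega_{-i}^t}[r^f(\cdot,a_{-i}^t)]$ depends on $f$ only through how many of the other players use $f$, so $\sum_{f\in a_i}\tilde y_i^{\,t}(f)=r_i(a_i,\omega_{-i}^t)$; hence the one-step increment is
\[
\tilde z_i^{\,t}(a_i)-\tilde z_i^{\,t-1}(a_i)=r_i(a_i,\omega_{-i}^t)-r_i(a_i^\ast,\omega_{-i}^t)\le-\epsilon<0,
\]
so $\tilde z_i^{\,t}(a_i)\le\tilde z_i^{\,t-1}(a_i)-\epsilon\le -M$ for all such $i,a_i$, i.e.\ $\omega^{t+1}\in U_M$. (Here $\omega^{t}$ is the policy governed by $\tilde z_i^{\,t-1}$ and $\omega^{t+1}$ by $\tilde z_i^{\,t}$, a harmless index shift relative to the statement; iterating this inequality from the initialization is exactly what yields the $\exp(-M-\eta\epsilon t)$ decay in Theorem~\ref{thm:nash}.)

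The one genuinely delicate point is the first step — turning ``$\tilde z_i^{\,t}(a_i)\le -M$ for every non-equilibrium $a_i$'' into a distance bound polynomial in $k,F$ rather than in $A=\binom{F}{k}$. For the lemma as stated, which only claims $U_M\subseteq U_\epsilon$ for \emph{sufficiently large} $M$, the naive union bound is already enough; obtaining the $kF$ scaling that the convergence rate of Theorem~\ref{thm:nash} relies on is what forces the argument to exploit the combinatorial product form of $\omega_i^{t+1}$ together with the ``every $k$-subset is an action'' hypothesis, reducing arbitrary deviations to single-facility swaps. The remaining ingredients — the probability-ratio identity, the Lipschitz continuity underlying $U_\epsilon$, and the negativity of the reward increment inside $U_\epsilon$ — are routine.
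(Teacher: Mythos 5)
Your proposal is correct and follows essentially the same route as the paper: the containment $U_M\subseteq U_\epsilon$ via bounding $1-\omega_i^t(a_i^\ast)$ (with the $kF$ constant obtained, as in the paper, by reducing to single-facility swaps and the product structure of the exponential weights), and forward invariance via the $\epsilon$-gap from Lemma~\ref{lem:neighborhood} forcing $\tilde z_i^t(a_i)$ to decrease. You are in fact slightly more careful than the paper on two points it glosses over — why $U_\epsilon$ may be taken as an $\ell_1$-neighborhood, and the index pairing between $\omega^{t+1}$ and $\tilde z_i^t$ — so no changes are needed.
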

Thus, if $\omega^0$ is in the neighborhood $U_{M} \subseteq U_\epsilon$, then the reward estimator $\Tilde{z}_i^t(a_i)$ can only decrease (by Lemma \ref{lem:neighborhood}), and hence the algorithm will give an updated policy $\omega^{t}$ that is also within the neighborhood set $U_\epsilon$. 

Also, note that $\omega_i^\ast$ is a strict Nash equilibrium, which implies that $\left| \omega_i^t - \omega_i^\ast\right|_1 = 2(1 - \omega_i^t(a_i^\ast))$. Hence, to establish the convergence rate, we need to lower bound
\begin{align*}
        \omega_i^t(a_i^\ast) 
        = \ & \frac{\prod_{f \in a_i^\ast} \Tilde{\omega}_i^t(f)}{\sum_{a^\prime \in \mathcal{A}} \prod_{f^\prime \in a^\prime} \Tilde{\omega}_i^t(f^\prime)} 
        =  \frac{\prod_{f \in a_i^\ast} \exp\left(\eta\sum^t_{j=0}\tilde{y}_i^j(f)\right)}{\sum_{a^\prime \in \mathcal{A}} \prod_{f^\prime \in a^\prime} \exp\left(\eta\sum^t_{j=0}\tilde{y}_i^j(f^\prime)\right)} \,.
\end{align*}

\paragraph{Technical challenge}
We remark that if we directly apply Lemma \ref{lem:nash_UM}, we can get
\begin{align*}
    \omega_i^t(a_i^\ast) 
        = \ & \frac{\prod_{f \in a_i^\ast} \Tilde{\omega}_i^t(f)}{\sum_{a^\prime \in \mathcal{A}} \prod_{f^\prime \in a^\prime} \Tilde{\omega}_i^t(f^\prime)} 
        \geq \  \frac{1}{1 + \sum_{a_i \in \mathcal{A}_i, a_i \neq a_i^\ast}\left(\prod_{f \in a_i} \Tilde{\omega}_i^t(f) - \prod_{f^\prime \in a_i^\ast} \Tilde{\omega}_i^t(f^\prime)\right)} \\
        \geq \ & 1 -  \sum_{a_i \in \mathcal{A}_i, a_i \neq a_i^\ast}\bigg(\prod_{f \in a_i} \Tilde{\omega}_i^t(f) - \prod_{f^\prime \in a_i^\ast} \Tilde{\omega}_i^t(f^\prime)\bigg) \,.
\end{align*}
Suppose one can upper bound $\sum_{a_i \in \mathcal{A}_i, a_i \neq a_i^\ast}\bigg(\prod_{f \in a_i} \Tilde{\omega}_i^t(f) - \prod_{f^\prime \in a_i^\ast} \Tilde{\omega}_i^t(f^\prime)\bigg) \leq \exp(-t)$, then this gives $1 - \sum_{a_i \in \mathcal{A}_i, a_i \neq a_i^\ast} \exp(-t)$, which yields a convergence rate of $(|\mathcal{A}_i|-1)\exp(-t)$ as $ \left\| \omega_i^t - \omega_i^\ast\right\|_1 = 2(1 - \omega_i^t(a_i^\ast)) $.
This approach implies that the convergence rate scales linearly with the number of actions (thus exponentially with the number of facilities), and is what we wanted to avoid in the analysis. 

To overcome this exponential dependency, we utilize the fact that any $k$-facility combination is an action, which means that we can order the facility from $f_1, \ldots, f_{F}$ in decreasing order of $\Tilde{\omega}_i^t(f)$ and $f_1, \ldots, f_k$ form the optimal pure Nash action $a_i^\ast$. 
\begin{proof}[Proof of Theorem~\ref{thm:nash}]
    Using the above-mentioned observation, we have,
    \begin{align}
        \nonumber \frac{\prod_{f \in a_i^\ast} \exp\left(\eta\sum^t_{j=0}\tilde{y}_i^j(f)\right)}{\sum_{a^\prime \in \mathcal{A}} \prod_{f^\prime \in a^\prime} \exp\left(\eta\sum^t_{j=0}\tilde{y}_i^t(f^\prime)\right)} 
        \geq \ & \prod_{m \in [k]} \frac{\exp\left(\eta\sum^t_{j=0}\tilde{y}_i^j(f_m)\right)}{\exp\left(\eta\sum^t_{j=0}\tilde{y}_i^j(f_m)\right) + \sum_{\ell > k} \exp\left(\eta\sum^t_{j=0}\tilde{y}_i^j(f_\ell)\right)} \\
        \nonumber = \ & \prod_{m \in [k]} \frac{1}{1 +  \sum_{\ell > k} \exp \left(\eta \sum^t_{j=0}\left(\tilde{y}_i^j(f_\ell) - \tilde{y}_i^j(f_m)\right)\right)} \\
        \label{Eq_888}\geq \ & \prod_{m \in [k]}  \Bigg(1 - \sum_{\ell > k} \exp \Bigg(\eta \sum^t_{j=0}\left(\tilde{y}_i^j(f_\ell) - \tilde{y}_i^j(f_m)\right)\Bigg)\Bigg) \,,
    \end{align}
    where the first inequality is by noting that any $\prod_{f^\prime \in a^\prime} \exp\left(\eta\sum^t_{j=0}\tilde{y}_i^t(f^\prime)\right)$ on the denominator of the left-hand side can be found on the product of the denominators of the right-hand side.  

    Consider an action $a_i$ formed from $a_i^\ast$ by replacing $f_m$ ($m\le k$) with $f_\ell$ ($\ell \ge k+1$), then
    \begin{align}
        \label{Eq_887}\Tilde{z}_i^t(a_i) = \sum^t_{j=0}\left(\sum_{f \in a_i} \Tilde{y}_i^j (f) - \sum_{f^\prime \in a^\ast_i} \Tilde{y}_i^j \left(f^\prime\right) \right)
        =  \sum^t_{j=0}\left(\Tilde{y}_i^j (f_{\ell})  -  \Tilde{y}_i^j (f_{m})\right)\,,
    \end{align}
    and 
    \begin{equation}\label{Eq_886}
        \Tilde{z}_i^{t}(a_i) 
        =  \Tilde{z}_i^{t-1}(a_i) + \eta \bigg(\sum_{f \in a_i} \Tilde{y}_i^t (f) - \sum_{f^\prime \in a^\ast_i} \Tilde{y}_i^t \left(f^\prime\right) \bigg)
        \leq  - M - \eta \epsilon t \,,
    \end{equation}
    %
    
    where the inequality is by the condition of starting with $\omega^0 \in U_{M}$ such that $M$ is large enough for $U_{M} \subseteq U_\epsilon$ and Lemma \ref{lem:nash_UM}. 

    Combining Equations (\ref{Eq_888}), (\ref{Eq_887}) and (\ref{Eq_886}), we have $\omega_i^t(a_i^\ast) \geq 1 - kF \exp(-M - \eta \epsilon t)$. Therefore, we obtain the result in Theorem \ref{thm:nash}, i.e., $\left\| \omega_i^t - \omega_i^\ast\right\|_1
        =  2(1 - \omega_i^t(a_i^\ast)) 
        \leq  2(kF \exp(-  M - \eta \epsilon t)) $.
\end{proof} 

Then we consider the case that each player observes only $R^f(a^t_i, a^t_{-i})$, instead of the expected rewards.  
\begin{restatable}[]{thm}{stocNASH}\label{thm:stoc_nash}
Consider the case where each player receives a stochastic reward under the full information setting. Assume the game permits a strict Nash equilibrium $\omega^\ast = (\omega_1^\ast, \cdots, \omega_n^\ast)$. Let $\Tilde{y}_i^{t}(f) = R^f(a^t_i, a^t_{-i})$ in Line 6 of Algorithm \ref{alg}, and set the learning rate to be time-dependent
such that $\sum^\infty_{t=0} \eta_t^2 \leq \frac{ \delta \cdot M^2}{8kn (F - 1)} \leq \sum^\infty_{t=0} \eta_t = \infty$.
Suppose $\Tilde{y}_i^0(f), \forall i \in [n]$ is initalized such that $\omega^0  \in U_{2M} \subseteq U_\epsilon$, then for any $i \in [n]$ and any $t$, we have \[\|\omega_i^t - \omega_i^\ast\|_1 \leq 2k F \exp \left(-M - \epsilon \sum^t_{j=0} \eta_j\right) \,,\] with probability at least $1 - \delta$, where $M \geq \left|\log\left(\frac{\epsilon}{2kF}\right)\right| $, and $\epsilon$ is a constant that is game-dependent only. 
\end{restatable}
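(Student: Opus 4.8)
The plan is to deduce the stochastic statement from the deterministic one (Theorem~\ref{thm:nash}) by isolating a single high-probability \emph{confinement} event on which the iterates never leave the neighbourhood $U_M$ of the strict Nash equilibrium; on that event the convergence estimate goes through exactly as in the proof of Theorem~\ref{thm:nash}, with the deterministic per-step decrement $\eta\epsilon$ of $\tilde z_i^t(a_i)$ replaced by the time-weighted drift $\epsilon\eta_t$. With a time-varying learning rate the log-weight of facility $f$ is $\sum_j\eta_j\tilde y_i^j(f)$, where $\tilde y_i^j(f)=R^f(a_i,a_{-i}^j)$ is observed for every $a_i$ under full information, and $\tilde z_i^t(a_i)$ denotes the corresponding $\eta$-weighted log-weight gap $\sum_{j=0}^t\eta_j\bigl(\sum_{f\in a_i}\tilde y_i^j(f)-\sum_{f'\in a_i^\ast}\tilde y_i^j(f')\bigr)$. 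Ordering the facilities $f_1,\dots,f_F$ by decreasing weight again puts $f_1,\dots,f_k$ inside $a_i^\ast$ whenever $\omega_i^t\in U_M$ (using, as in Theorem~\ref{thm:nash}, that every $k$-subset is an action), and a general deviation $a_i\neq a_i^\ast$ decomposes into single-facility swaps, each removing some $f_m$ with $m\le k$ and adding some $f_\ell$ with $\ell>k$; so it suffices to control the single-swap quantities $\tilde z_i^t(a_i)=\sum_{j=0}^t\eta_j(\tilde y_i^j(f_\ell)-\tilde y_i^j(f_m))$, and if each of these is at most $-M-\epsilon\sum_{j=0}^t\eta_j$ then the argument of Theorem~\ref{thm:nash} gives $\omega_i^t(a_i^\ast)\ge 1-kF\exp(-M-\epsilon\sum_{j=0}^t\eta_j)$, i.e.\ the claimed $\ell_1$ bound.

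The confinement step replaces Lemma~\ref{lem:nash_UM} and is a stopping-time argument. Let $\tau$ be the first time the iterate leaves $U_M$. For $j<\tau$ we have $\omega^j\in U_M\subseteq U_\epsilon$, so Lemma~\ref{lem:neighborhood} together with Assumption~\ref{asmp:sto_rewards} gives $\mathbb{E}[\tilde y_i^j(f_\ell)-\tilde y_i^j(f_m)\mid\mathcal{H}_j]=r_i(a_i,\omega_{-i}^j)-r_i(a_i^\ast,\omega_{-i}^j)\le-\epsilon$. Writing each increment as this drift plus a martingale difference $\xi_j^{(i,m,\ell)}$ with $\mathbb{E}[\xi_j\mid\mathcal{H}_j]=0$ and $|\xi_j|\le 2$, and using $\tilde z_i^0(a_i)\le-2M$ (the hypothesis $\omega^0\in U_{2M}$), we obtain for every $t<\tau$
\begin{equation*}
\tilde z_i^t(a_i)\ \le\ -2M-\epsilon\sum_{j=0}^t\eta_j+\sum_{j=0}^t\eta_j\xi_j^{(i,m,\ell)}.
\end{equation*}
Since escaping $U_M$ forces some $\tilde z_i^t(a_i)$ above $-M$ while the drift only pushes it down, $\{\tau<\infty\}$ is contained in the union over players $i$ and swaps $(m,\ell)$ of the events $\{\sup_t\sum_{j=0}^t\eta_j\xi_j^{(i,m,\ell)}\ge M\}$; and on the complement of this union we simultaneously stay in $U_M$ forever and have $\tilde z_i^t(a_i)\le-M-\epsilon\sum_{j=0}^t\eta_j$, which is exactly what the first paragraph needs.

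It remains to bound each of these events. The process $N_t^{(i,m,\ell)}=\sum_{j=0}^t\eta_j\xi_j^{(i,m,\ell)}$ is a martingale with increments bounded by $2\eta_j$, so the maximal Azuma--Hoeffding inequality (equivalently, Ville's inequality applied to the supermartingale $\exp(\lambda N_t-2\lambda^2\sum_{j\le t}\eta_j^2)$) gives $\Prob(\sup_t N_t^{(i,m,\ell)}\ge M)\le\exp(-M^2/(8\sum_j\eta_j^2))$, which is meaningful precisely because $\sum_j\eta_j^2<\infty$. The hypothesis $\sum_j\eta_j^2\le\delta M^2/(8kn(F-1))$ makes this at most $\exp(-kn(F-1)/\delta)\le\delta/(kn(F-1))$, and a union bound over the $\le n$ players and $\le k(F-1)$ single-facility swaps yields total failure probability at most $\delta$. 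Finally $\sum_j\eta_j=\infty$ forces $\epsilon\sum_{j=0}^t\eta_j\to\infty$, so on the good event $\|\omega_i^t-\omega_i^\ast\|_1\le 2kF\exp(-M-\epsilon\sum_{j=0}^t\eta_j)\to 0$.

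I expect the main obstacle to be the circularity inherent in the confinement step: the drift bound that drives the whole argument holds only while the iterate lies in $U_\epsilon$, yet ``lying in $U_\epsilon$ for all time'' is exactly the event whose probability we must lower bound. Resolving this cleanly forces the stopping-time device above together with a \emph{maximal} (uniform-in-$t$) concentration inequality rather than a fixed-time one, and forces the step-size schedule to be balanced so that $\sum_t\eta_t^2$ is small enough for confinement while $\sum_t\eta_t$ still diverges for convergence (which is also why the greedy large-$\eta$ strategy of the Remark after Theorem~\ref{thm:nash} fails here). A secondary point, inherited from Theorem~\ref{thm:nash}, is the reduction from all $\binom{F}{k}$ actions to the $k(F-k)$ single-facility deviations via the weight-ordering trick and the ``every $k$-subset is an action'' hypothesis, which is what keeps the final bound linear rather than exponential in $F$.
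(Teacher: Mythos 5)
Your proof is correct and follows essentially the same route as the paper's: decompose $\tilde z_i^t(a_i)$ into a drift term (at most $-\epsilon\eta_j$ per step by Lemma \ref{lem:neighborhood} while the iterate is confined) plus an $\eta$-weighted martingale, start in $U_{2M}$ so that a uniform-in-time deviation bound of $M$ on the martingale keeps the iterate in $U_M$ for all $t$, union bound over players and single-facility swaps, and then reuse the Theorem \ref{thm:nash} weight-ordering argument. The only differences are minor: you invoke a maximal Azuma--Hoeffding/Ville inequality where the paper uses Doob's $L^2$ maximal inequality, and you formalize the confinement with a stopping time where the paper argues by induction --- both suffice under the stated step-size condition, and your exponent $-M-\epsilon\sum_{j}\eta_j$ matches the theorem statement (the paper's own proof ends with a sign slip, $-M+\epsilon\sum_j\eta_j$).
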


We remark that in the case of stochastic rewards, the convergence rate of our algorithm cannot be arbitrarily large, as the learning rate $\eta$ cannot be taken to be arbitrarily large. If we take $\eta_t = \beta t^{-\alpha}$, with $\beta$ being a small positive constant and $\alpha \in (1/2, 1)$. Then our convergence rate is $\|\omega_i^t - \omega_i^\ast\|_1 \leq O \left(\exp \left(-\frac{\beta}{1 - \alpha}t^{1 - \alpha}\right)\right)$, which is close to exponentially fast convergence. When the reward function is smooth, we remark that this can imply each player only experience constant regret. 
\section{Conclusion}

This paper studied the congestion game under semi-bandit feedback and presented a modified version of the well-known exponential weights algorithm. The algorithm ensures sublinear regret for every player, with the regret being linearly dependent on the number of facilities. Additionally, the proposed algorithm can learn a policy that rapidly converges to the pure Nash policy, with the convergence rate also being linearly dependent on the number of facilities. To our best knowledge, these are the first results on congestion games for sublinear individual regret and geometric Nash convergence rate, without an exponential dependency on the number of facilities. 

There are several possible directions to further study the online congestion game. First, as our work only considers the semi-bandit feedback model for individual regret, the regret and convergence rate under the full-bandit feedback model remains unclear. For the Nash convergence result, our algorithm only enjoys theoretical guarantee in the full-information setting. It remains as future work to extend this result to semi-bandits and full-bandits feedback model. 
Moreover, it also remains in question whether the results of this work can be extended to the online Markov congestion game proposed by \citet{cuilearning22}.
\bibliography{ref}

\newpage
\appendix
\section*{Appendix}

\section{Proof of Theorem \ref{thm:regret}}
\regret*
\begin{proof}
 
Let $Q_i^t = \sum_{a_i \in \mathcal{A}_i} \prod_{f \in a_i} \Tilde{\omega}_i^t(f)$. Let $a_i^\ast$ denote the pure Nash equilibrium.

Notice that 
\begin{align*}
    \prod_{f \in a_i^\ast} \exp\left(\eta \sum^T_{j=1} \tilde{y}_i^j (f) \right)
    =  \prod_{f \in a_i^\ast} \Tilde{\omega}_i^T(f) 
    \leq \  Q_i^T 
    =  Q_i^0 \frac{Q_i^1}{Q_i^0} \ldots \frac{Q_i^T}{Q_i^{T-1}} = Q_i^0 \prod \frac{Q_i^t}{Q_i^{t-1}} \,.
\end{align*}

By the definition of $Q_i^t$ and the update rules of Algorithm \ref{alg}, we have
\begin{align*}
    \frac{Q_i^t}{Q_i^{t-1}} 
    = \ &  \frac{\sum_{a_i \in \mathcal{A}_i} \prod_{f \in a_i} \Tilde{\omega}_i^t(f)}{Q_i^{t-1}} \\
    = \ &  \sum_{a_i \in \mathcal{A}_i} \left(\frac{\exp \left(\eta \sum^{t-1}_{j=1} \sum_{f \in a_i} \tilde{y}_i^j(f)\right)}{Q_i^{t-1}}\right) \exp \left(\eta \sum_{f \in a_i} \tilde{y}_i^t(f)\right) \\
    = \ & \sum_{a_i \in \mathcal{A}_i} \left(\frac{\prod_{f \in a_i}\tilde{\omega}_i^{t-1}(f)}{Q_i^{t-1}}\right) \exp \left(\eta \sum_{f \in a_i} \tilde{y}_i^t(f)\right) \\
    = \ & \sum_{a_i \in \mathcal{A}_i} \left(\frac{\prod_{f \in a_i}\tilde{\omega}_i^{t-1}(f)}{Q_i^{t-1}}\right) \exp \left(\eta \sum_{f \in a_i} \tilde{y}_i^t(f)\right) \\
    = \ & \sum_{a_i \in \mathcal{A}_i} \omega_i^t(a_i) \exp \left(\eta \sum_{f \in a_i} \tilde{y}_i^t(f)\right) \\
    \leq \ &  \sum_{a_i \in \mathcal{A}_i} \omega_i^t(a_i) \left( 1+ \eta \sum_{f \in a_i} \tilde{y}_i^t(f) + \left(\eta \sum_{f \in a_i} \tilde{y}_i^t(f)\right)^2\right)\\
    =\ & 1 + \sum_{a_i \in \mathcal{A}_i} \omega_i^t(a_i) \left(\eta \sum_{f \in a_i} \tilde{y}_i^t(f) + \left(\eta \sum_{f \in a_i} \tilde{y}_i^t(f)\right)^2\right) \\
    \leq \ & \exp \left( \sum_{a_i \in \mathcal{A}_i} \omega_i^t(a_i) \left(\eta \sum_{f \in a_i} \tilde{y}_i^t(f) + \left(\eta \sum_{f \in a_i} \tilde{y}_i^t(f)\right)^2\right) \right) \,,
\end{align*}
where the first inequality is by $\eta \leq \frac{1}{k}$, $\exp(x) \leq 1 + x + x^2$ for $x \leq 1$ and the second inequality is by $1 + x \leq \exp(x)$ for $x \in \mathbb{R}$.

By initialization, we also have
\begin{align*}
    Q_i^0=\sum_{a_i \in \mathcal{A}_i} \exp \left(\eta \sum_{f \in a_i} \tilde{y}_i^0(f)\right) \leq A_i \,.
\end{align*}

Thus we have
\begin{align*}
    \prod_{f \in a_i^\ast} \exp\left(\eta \sum^T_{j=1} \tilde{y}_i^j (f)\right) 
    = \ & \exp\left(\eta \sum^T_{j=1} \sum_{f \in a_i^\ast}\tilde{y}_i^j (f)\right) \\
    \leq \ & A_i \prod^T_{t=1} \exp \left( \sum_{a_i \in \mathcal{A}_i} \omega_i^t(a_i) \left(\eta \sum_{f \in a_i} \tilde{y}_i^t(f) + \left(\eta \sum_{f \in a_i} \tilde{y}_i^t(f)\right)^2\right) \right) \\
    = \ & A_i \exp \left( \sum^T_{t=1} \sum_{a_i \in \mathcal{A}_i} \omega_i^t(a_i) \left(\eta \sum_{f \in a_i} \tilde{y}_i^t(f) + \eta^2 \left(\sum_{f \in a_i} \tilde{y}_i^t(f)\right)^2\right) \right) \,.
\end{align*}

Taking logarithms on both sides, and noticing that $\log \left(A_i\right) \leq k F $ we have
\begin{align*}
    \eta \sum^T_{j=1} \sum_{f \in a_i^\ast}\tilde{y}_i^j (f)
    \leq \ & k F +  \sum^T_{t=1} \sum_{a_i \in \mathcal{A}_i} \omega_i^t(a_i) \left(\eta \sum_{f \in a_i} \tilde{y}_i^t(f) + \eta^2 \left(\sum_{f \in a_i} \tilde{y}_i^t(f)\right)^2\right) \,.
\end{align*}

Rearranging the terms, we have
\begin{align*}
    \sum^T_{t=1} \sum_{f \in a_i^\ast}\tilde{y}_i^t (f) - \sum_{a_i \in \mathcal{A}_i} \omega_i^t(a)  \sum_{f \in a_i}\tilde{y}_i^t (f) 
    \leq  \frac{k F }{\eta} + \eta \sum^T_{t=1} \sum_{a_i \in \mathcal{A}_i} \omega_i^t(a_i) \left(\sum_{f \in a_i} \tilde{y}_i^t(f)\right)^2 \,.
\end{align*}

It thus amounts to bound the second term. By the definition of $\tilde{y}_i^t$, we have
\begin{align*}
   & \sum_{a_i \in \mathcal{A}_i} \omega_i^t(a_i) \left(\sum_{f \in a_i} \tilde{y}_i^t(f)\right)^2 \\
    = \ &  \sum_{a_i \in \mathcal{A}_i} \omega_i^t(a_i) \left(\sum_{f \in a_i} 1 - \frac{\mathbb{I} \{f \in a_i^t\}(1 - R^f(a^t_i, a^t_{-i}))}{q_i^t(f)}\right)^2 \\
    \leq \ & k \sum_{a_i \in \mathcal{A}_i} \omega_i^t(a_i) \sum_{f \in a_i} \left(1 - \frac{\mathbb{I} \{f \in a_i^t\}(1 - R^f(a^t_i, a^t_{-i}))}{q_i^t(f)}\right)^2 \\
= \ & k \sum_{a_i \in \mathcal{A}_i}  \omega_i^t(a_i) \sum_{f \in a_i}\left(1-\frac{2 \mathbb{I}\left\{f \in a_i^t\right\}\left(1-R_i^f\left(a_i^t, a_i^t\right)\right)}{q_i^t(f)}+\left(\frac{\mathbb{I}\left\{f \in a_i^t\right\}\left(1-R_i^f\left(a_i^t, a_{-i}^t\right)\right)}{q_i^t(f)}\right)^2\right) \\
= \ & k+k\sum_{a_i \in \mathcal{A}_i}  \omega_i^t(a_i) \sum_{f \in a_i}\left(\left(\frac{\mathbb{I}\left\{f \in a_i^t\right\}\left(1-R_i^f\left(a_i^t, a_{-i}^t\right)\right)}{q_i^t(f)}\right)^2-\frac{2 \mathbb{I}\left\{f \in a_i^t\right\}\left(1-R_i^f\left(a_i^t, a_{-i}^t\right)\right)}{q_i^t(f)}\right) \\
= \ & k+k\sum_{f \in \mathcal{F}}\left(\left(\frac{\mathbb{I}\left\{f \in a_i^t\right\}\left(1-R_i^f\left(a_i^t, a_{-i}^t\right)\right)}{q_i^t(f)}\right)^2-\frac{2 \mathbb{I}\left\{f \in a_i^t\right\}\left(1-R_i^f\left(a_i^t, a_{-i}^t\right)\right)}{q_i^t(f)}\right) \sum_{a_i \in \mathcal{A}, f \in a_i} \omega_i^t(a_i)\\
= \ & k+k\sum_{f \in \mathcal{F}}\left(\left(\frac{\mathbb{I}\left\{f \in a_i^t\right\}\left(1-R_i^f\left(a_i^t, a_{-i}^t\right)\right)}{q_i^t(f)}\right)^2-\frac{2 \mathbb{I}\left\{f \in a_i^t\right\}\left(1-R_i^f\left(a_i^t, a_{-i}^t\right)\right)}{q_i^t(f)}\right) q_i^t(f)\\
= \ & k+k\sum_{f \in \mathcal{F}}\left(\left(\frac{\mathbb{I}\left\{f \in a_i^t\right\}\left(1-R_i^f\left(a_i^t, a_{-i}^t\right)\right)}{q_i^t(f)}\right)^2 q_i^t(f) -2 \mathbb{I}\left\{f \in a_i^t\right\}\left(1-R_i^f\left(a_i^t, a_{-i}^t\right)\right)\right) \\
\leq \ & k+k\sum_{f \in \mathcal{F}}\left(\frac{\mathbb{I}\left\{f \in a_i^t\right\}\left(1-R_i^f\left(a_i^t, a_{-i}^t\right)\right)}{q_i^t(f)}\right)^2 q_i^t(f)\,.
\end{align*}

Let $\mathbb{E}_t[\cdot]$ denote conditional expectation over all history up to time $t$. Notice that our estimator $\tilde{y}_i^t(f)$ is unbiased. 
Taking expectations on both sides yield
\begin{align*}
    \mathbb{E}\left[\sum_{a_i \in \mathcal{A}_i} \omega_i^t(a_i) \left(\sum_{f \in a_i} \tilde{y}_i^t(f)\right)^2 \right] 
    \leq \ & 
    k + k\mathbb{E}\left[\sum_{f \in \mathcal{F}}\left(\frac{\mathbb{I}\left\{f \in a_i^t\right\}\left(1-R_i^f\left(a_i^t, a_{-i}^t\right)\right)}{q_i^t(f)}\right)^2 q_i^t(f) \right]\\
    \leq \ & k + k\mathbb{E}\left[\mathbb{E}_{t-1} \left[\sum_{f \in \mathcal{F}}\frac{\mathbb{I}\left\{f \in a_i^t\right\}}{q_i^t(f)} \right] \right] \\
    \leq \ & k + kF \,.
\end{align*}

Combining the terms, and by the unbiasedness of our estimator $\Tilde{y}_i^t$, we have
\begin{align*}
    \text{Regret}_i(T) 
    = \ & \mathbb{E} \left[\sum^T_{t=1} \sum_{f \in a_i^\ast}\tilde{y}_i^t (f) - \sum_{a_i \in \mathcal{A}_i} \omega_i^t(a)  \sum_{f \in a_i}\tilde{y}_i^t (f) \right] \\
    \leq \ & \frac{k F }{\eta} + \eta T \left(k + kF\right) \\
    = \ & O \left(\frac{k F }{\eta} + \eta T \left(kF\right)\right)\,.
\end{align*}

Taking $\eta = \frac{1}{\sqrt{T}}$ gives
\begin{align*}
    \text{Regret}_i(T)  = O \left(k F \sqrt{T} \right) \,.
\end{align*}
\end{proof}
\newpage
\section{Proof of Corollary \ref{cor:poa}}
\poa*
\begin{proof}
    The proof largely follows from Proposition 2 of \citep{syrgkanis2015fast}.

    Denote $a_i^\ast = \argmax_{\omega_i \in \gA_i} \sum^T_{t=1} r_i(\omega_i, \omega_{-i}^t) - r_i(\omega_i^t, \omega_{-i}^t)$.

    For any $i \in [n]$ and policy $\omega = (\omega_i, \omega_{-i})$, by Definition \ref{def:individual_regret}, we have
    \begin{align*}
        \sum^T_{t=1} r_i(\omega_i, \omega_{-i}) \geq \sum^T_{t=1} r_i(a_i^\ast, \omega_{-i}^t) - \mathrm{Regret}_i(T) \,.
    \end{align*}

    Summing over $i \in [n]$, we have
    \begin{align*}
        \sum^n_{i=1} \sum^T_{t=1} r_i(\omega_i, \omega_{-i}) \geq \ & \sum^n_{i=1}\sum^T_{t=1} r_i(a_i^\ast, \omega_{-i}^t) -  \sum^n_{i=1}\mathrm{Regret}_i(T) \\
        = \ & \sum^n_{i=1}\sum^T_{t=1} \mathbb{E}_{a_{-i}^t \sim \omega^t} \left[r_i(a_i^\ast, a_{-i}^t) \right] - \sum^n_{i=1}\mathrm{Regret}_i(T) \\
        \geq \ & \lambda \cdot \mathrm{OPT} - \mu \mathbb{E}_{a \sim \omega^t}\left[W(a)\right] - \sum^n_{i=1}\mathrm{Regret}_i(T) \,,
    \end{align*}
    where the second inequality is by Definition \ref{def:smooth}.

    Rearranging the terms and substituting results from Theorem \ref{thm:regret}, we have
    \begin{align*}
        \frac{1}{T} \sum^T_{t=1} W(\omega^t) \geq \frac{\lambda}{1 + \mu} \mathrm{OPT} - \frac{1}{T(1+\mu)} \sum^n_{i=1} O \left(k \log \left(A_i\right) \sqrt{T} + F\sqrt{T}\right) \,.
    \end{align*}
    
\end{proof}

\newpage
\section{Proof of Lemma \ref{lem:nash_UM}}
\nashUM*

\begin{proof}
    \begin{align*}
        \omega_i^t(a_i^\ast) 
        = \ & \frac{\prod_{f \in a_i^\ast} \Tilde{\omega}_i^t(f)}{\sum_{a^\prime \in \mathcal{A}} \prod_{f^\prime \in a^\prime} \Tilde{\omega}_i^t\left(f^\prime\right)} 
        =  \frac{\prod_{f \in a_i^\ast} \exp\left(\eta\sum^t_{j=0}\tilde{y}_i^j(f)\right)}{\sum_{a^\prime \in \mathcal{A}} \prod_{f^\prime \in a^\prime} \exp\left(\eta\sum^t_{j=0}\tilde{y}_i^j\left(f^\prime\right)\right)} \,.
    \end{align*}
    Without loss of generality, we order the facility from $1, \ldots, |F|$, such that $f_1, \ldots, f_k$ form the optimal pure Nash action $a_i^\ast$.
    We have
    \begin{align*}
        \frac{\prod_{f \in a_i^\ast} \exp\left(\eta\sum^t_{j=0}\tilde{y}_i^j(f)\right)}{\sum_{a^\prime \in \mathcal{A}} \prod_{f^\prime \in a^\prime} \exp\left(\eta\sum^t_{j=0}\tilde{y}_i^j\left(f^\prime\right)\right)}  
        \geq \ & \prod_{m \in [k]} \frac{\exp\left(\eta\sum^t_{j=0}\tilde{y}_i^j\left(f_m\right)\right)}{\exp\left(\eta\sum^t_{j=0}\tilde{y}_i^j\left(f_m\right)\right) + \sum_{\ell > k} \exp\left(\eta\sum^t_{j=0}\tilde{y}_i^j\left(f_\ell\right)\right)} \,,
    \end{align*}
    where the inequality is by noting that any $\prod_{f^\prime \in a^\prime} \exp\left(\eta\sum^t_{j=0}\tilde{y}_i^t(f^\prime)\right)$ on the denominator of the left-hand side can be found on the product of the denominators of the right-hand side. 
    We then have
    \begin{align*}
        &\prod_{m \in [k]} \frac{\exp\left(\eta\sum^t_{j=0}\tilde{y}_i^j\left(f_m\right)\right)}{\exp\left(\eta\sum^t_{j=0}\tilde{y}_i^j\left(f_m\right)\right) + \sum_{\ell > k} \exp\left(\eta\sum^t_{j=0}\tilde{y}_i^j\left(f_\ell\right)\right)} \\
        = \ & \prod_{m \in [k]} \frac{1}{1 +  \sum_{\ell > k} \exp \left(\eta \sum^t_{j=0}\left(\tilde{y}_i^j\left(f_\ell\right) - \tilde{y}_i^j\left(f_m\right)\right)\right)} \\
        \geq \ & \prod_{m \in [k]}  \left(1 - \sum_{\ell > k} \exp \left(\eta \sum^t_{j=0}\left(\tilde{y}_i^j\left(f_\ell\right) - \tilde{y}_i^j\left(f_m\right)\right)\right)\right) \\
        \geq \ & \prod_{m \in [k]}  \left(1 - \sum_{\ell > k} \exp \left( - \eta M\right)\right) \\
        \geq \ & 1 - k F \exp \left( - \eta M\right)\,,
    \end{align*}
    where the second inequality is by considering an action $a_i$ formed from $a_i^\ast$ by replacing $f_m$ with $f_\ell$, then
    \begin{align*}
        \Tilde{z}_i^t(a_i) = \sum^t_{j=0}\left(\sum_{f \in a_i} \Tilde{y}_i^j (f) - \sum_{f^\prime \in a^\ast_i} \Tilde{y}_i^j \left(f^\prime\right) \right)
        = \ & \sum^t_{j=0}\left(\sum_{f \in a_i, f \neq f_m}\Tilde{y}_i^j (f) + \Tilde{y}_i^j (f_{\ell})  - \sum_{f \in a_i^\ast} \Tilde{y}_i^j (f) \right)\\
        = \ & \sum^t_{j=0}\left(\Tilde{y}_i^j (f_{\ell})  -  \Tilde{y}_i^j (f_{m})\right)
        \leq \ - M \,.
    \end{align*}
    The last inequality is by the condition of $z_i^t(a_i) \leq -M$.
    
    Therefore, 
    \begin{align*}
        \left\| \omega_i^t - \omega_i^\ast\right\|_1
        = \ & 2(1 - \omega_i^t(a_i^\ast)) \\
        \leq \ &  2(kF \exp(- M)) \,.
    \end{align*}
    To have $U_{M} \subseteq U_\epsilon$, for any $\epsilon$, we let  
    \begin{align*}
        2(k F \exp(- M)) \leq \ & \epsilon \,, \\
        M \geq \ & \left|\log\left(\frac{\epsilon}{2kF}\right)\right| \,.
    \end{align*}
    For the second claim, notice that
    \begin{align*}
        \Tilde{z}_i^{t+1}(a_i) 
        = \ &  \Tilde{z}_i^{t}(a_i) + \eta \left(\sum_{f \in a_i}\tilde{y}_i^t(f) - \sum_{f^\ast \in a_i^\ast}\tilde{y}_i^t(f^\ast )\right) \\
        \leq \ & \Tilde{z}_i^{t}(a_i) - \eta \epsilon \\
        \leq \ & -M - \eta \epsilon \,,
    \end{align*}
    where the first inequality is due to the definition of $U$.
    Since $\omega_i^{t+1}$ is induced by $\Tilde{z}_i^{t+1}(a_i) $, we have $\omega_i^{t+1} \in U_{M}$.
\end{proof}

\newpage
\section{Proof of Theorem \ref{thm:stoc_nash}}
\stocNASH*
\begin{proof}
Define $r_i^f(a_i, \omega_{-i}) = \mathbb{E}_{a_{-i} \sim  \omega_{-i}}[r_i^f(a_i, a_{-i})]$. 
Let  $x_i^t(f) = \tilde{y}_i^t(f) -  r_i^f(a_i^t, \omega_{-i}^t)$,
then we have
\begin{align*}
    \tilde{z}_i^{t+1}(a_i) 
    = \ & \tilde{z}_i^{t}(a_i)  + \eta_t \left(\sum_{f \in a_i} \tilde{y}_i^t(f) - \sum_{f \in a^\ast_i} \tilde{y}_i^t(f) \right)\\
    = \ & \tilde{z}_i^{t}(a_i) + \eta_t \left(\sum_{f \in a_i} r_i^f(a_i^t, \omega_{-i}^t) - \sum_{f \in a^\ast_i} r_i^f(a_i^t, \omega_{-i}^t)\right) \\
    & + \eta_t \left(\left(\sum_{f \in a_i} \tilde{y}_i^t(f) - \sum_{f \in a^\ast_i} \tilde{y}_i^t(f) \right) - \left(\sum_{f \in a_i} r_i^f(a_i^t, \omega_{-i}^t) - \sum_{f \in a^\ast_i} r_i^f(a_i^t, \omega_{-i}^t)\right) \right)\,.
\end{align*}

    Define 
    \[U_{2M} = \left\{\omega^t \text{ computed by Algorithm \ref{alg} } \mid \Tilde{z}_i^t(a_i) \leq - 2M \,, \forall a_i \neq a_i^\ast, \forall i \in [n] \right\} \,.\] 

    Similar to Lemma \ref{lem:nash_UM}, suppose that we initialize the algorithm such that $\Tilde{z}_i^0(a_i) \leq - 2M$, $\forall i \in [n]$ and with sufficiently large $M$, $U_{2M} \subseteq U_\epsilon$. 
    Then, we have
    \begin{align} \label{eq:stoc_z}
        \tilde{z}_i^{t+1}(a_i) 
        = \ & \tilde{z}_i^{t}(a_i) + \sum^t_{j=0}\eta_j\left(\sum_{f \in a_i} r_i^f(a_i^j, \omega_{-i}^j) - \sum_{f \in a^\ast_i} r_i^f(a_i^j, \omega_{-i}^j)\right) + 2k \max_f\sum^t_{j=0} \eta_j  x_i^t(f) \nonumber\\
        \leq \ & - 2M_i +\sum^t_{j=0} \eta_j\left(\sum_{f \in a_i} r_i^f(a_i^j, \omega_{-i}^j) - \sum_{f \in a^\ast_i} r_i^f(a_i^j, \omega_{-i}^j)\right) +  2k \max_f\sum^t_{j=0} \eta_j  x_i^t(f) \,.
    \end{align}

    Let $X_i^t(f) =\sum^t_{j = 1} \eta_j x_i^t(f) $, then $X_i^t(f)$ is a Martingale. By Doob's maximal inequality, we have
    \begin{align*}
        \mathbb{P} \left[\max_{j \in [1, t]}\left|X_i^j(f) \right| \geq M/2k\right] \leq \frac{4k^2\mathbb{E}\left[X_i^t(f)^2\right]}{\left(M\right)^2} \leq \frac{4k^2 \cdot j\sum^t_{j=0} \eta_j^2 }{\left(M\right)^2} \,.
    \end{align*}

    Notice that whenever $\mathbb{I}\left\{\max_{j \in [1, t]}\left|X_i^j(f) \right| \geq M/2k\right\} = 1$, $\mathbb{I}\left\{\max_{j \in [1, t+1]}\left|X_i^j(f) \right| \geq M/2k\right\} = 1$. 
    Thus, by choosing $\sum^\infty_{j=0} \eta_j^2 \leq \frac{ \delta \cdot M_i^2}{8n k^2(F - 1) }$, $\delta > 0$, we have
    \begin{align*}
        \mathbb{P}\left[\left|X_i^j(f) \right| \geq M/2k \,, \forall t\right] \leq \frac{\delta}{n (F - 1) } \,.
    \end{align*}

    Then we obtain
    \begin{align*}
        \mathbb{P}\left[\max_{i \in [n], f \in \mathcal{F}}\sup_t\left|X_i^j(f) \right| \geq M/2k \right] \leq \delta \,.
    \end{align*}

    With this and Eq. (\ref{eq:stoc_z}), we can use of Lemma \ref{lem:nash_UM} to show by induction that if $\omega^t \in U_{M}$, so is $\omega^{t+1} $, with probability at least $1 - \delta$. 
    Then, for any $i \in [n]$, we have
    \begin{align*}
        \mathbb{P} \left[\tilde{z}_i^{t+1} (a_i) \leq -M + \epsilon \sum^t_{j=0} \eta_j \right] \leq \delta \,.
    \end{align*}

    Similar to the argument of Theorem \ref{thm:nash}, we have
    \begin{align*}
        \omega_i^t(a_i^\ast)
        \geq \ & 1 - k F \exp \left(-M + \epsilon \sum^t_{j=0} \eta_j\right) \,,
    \end{align*}
    with probability at least $1 - \delta$.

    Hence, with a probability of at least $1 - \delta$, we have
    \begin{align*}
        \|\omega_i^t - \omega_i^\ast\|_1 \leq 2 k F \exp \left(-M + \epsilon \sum^t_{j=0} \eta_j\right)\,.
    \end{align*}
\end{proof}

\end{document}